\title{Likelihood-based Inference for Exponential-Family Random Graph
Models via Linear Programming}
\author{
    Pavel N.~Krivitsky
   \\
    Department of Statistics \\
    University of New South Wales \\
  Sydney, NSW, Australia \\
  \texttt{\href{mailto:p.krivitsky@unsw.edu.au}{\nolinkurl{p.krivitsky@unsw.edu.au}}} \\
   \And
    Alina R.~Kuvelkar
   \\
    Department of Statistics \\
    Penn State University \\
  State College, PA, USA \\
  \texttt{\href{mailto:ark336@psu.edu}{\nolinkurl{ark336@psu.edu}}} \\
   \And
    David R.~Hunter
   \\
    Department of Statistics \\
    Penn State University \\
  State College, PA, USA \\
  \texttt{\href{mailto:dhunter@stat.psu.edu}{\nolinkurl{dhunter@stat.psu.edu}}} \\
  }
\definecolor{shadecolor}{RGB}{248,248,248}
\newenvironment{Shaded}{\begin{snugshade}}{\end{snugshade}}
\newcommand{\CommentTok}[1]{\textcolor[rgb]{0.56,0.35,0.01}{\textit{#1}}}
\newcommand{\ControlFlowTok}[1]{\textcolor[rgb]{0.13,0.29,0.53}{\textbf{#1}}}
\newcommand{\DataTypeTok}[1]{\textcolor[rgb]{0.13,0.29,0.53}{#1}}
\newcommand{\DecValTok}[1]{\textcolor[rgb]{0.00,0.00,0.81}{#1}}
\newcommand{\FloatTok}[1]{\textcolor[rgb]{0.00,0.00,0.81}{#1}}
\newcommand{\KeywordTok}[1]{\textcolor[rgb]{0.13,0.29,0.53}{\textbf{#1}}}
\newcommand{\NormalTok}[1]{#1}
\newcommand{\OperatorTok}[1]{\textcolor[rgb]{0.81,0.36,0.00}{\textbf{#1}}}
\newcommand{\OtherTok}[1]{\textcolor[rgb]{0.56,0.35,0.01}{#1}}
\newcommand{\StringTok}[1]{\textcolor[rgb]{0.31,0.60,0.02}{#1}}
\date{}
 \DeclareMathOperator{\Normal}{Normal}
\def\Reals{\mathbb{R}}
\def\Expectation{\operatorname{\mathbb{E}}}
\def\Variance{\operatorname{\mathbb{V}\kern-0.072em ar}}
\def\SampleSpace{\mathcal{Y}}
\def\ConvexHull{\mathcal{C}}
\def\obs{\text{obs}}
\def\0V{\bm{0}}
\def\1V{\bm{1}}
\def\aV{\bm{a}}
\def\eV{\bm{e}}
\def\gV{\bm{g}}
\def\pV{\bm{p}}
\def\tV{\bm{t}}
\def\wV{\bm{w}}
\def\xV{\bm{x}}
\def\yV{\bm{y}}
\def\zV{\bm{z}}
\def\thetaV{\bm{\theta}}
\def\etaV{\bm{\eta}}
\def\xiV{\bm{\xi}}
\def\mM{{M}}
\def\tM{{T}}
\def\rM{{R}}
\def\yM{{Y}}
\def\zM{{Z}}
\def\tS{{T}}
\def\sS{{S}}
\def\T{{^\top}}
\def\inv{^{-1}}
\newtheorem{lemma}{Lemma}
\newtheorem{proposition}{Proposition}
\newtheorem{corollary}{Corollary}
\newcommand{\lpref}[1]{LP~\eqref{#1}}
\newcommand{\figref}[1]{Figure~\ref{#1}}
\newcommand{\secref}[1]{Section~\ref{#1}}
\newcommand{\abs}[1]{\lvert #1 \rvert}
\def\bse{\begin{subequations}}
\def\ese{\end{subequations}}
\begin{document}
\maketitle

\begin{abstract}
This article discusses the problem of determining whether a given point,
or set of points, lies within the convex hull of another set of points
in \(d\) dimensions. This problem arises naturally in a statistical
context when using a particular approximation to the loglikelihood
function for an exponential family model; in particular, we discuss the
application to network models here. While the convex hull question may
be solved via a simple linear program, this approach is not well known
in the statistical literature. Furthermore, this article details several
substantial improvements to the convex hull-testing algorithm currently
implemented in the widely used \texttt{ergm} package for network
modeling.
\end{abstract}

\keywords{
    convex hull
   \and
    duality
   \and
    MCMC
   \and
    missing data
  }

\clearpage

\hypertarget{sec:intro}{\section{Monte Carlo maximum likelihood estimation for exponential
families}\label{sec:intro}}

Suppose that we observe a complex data structure \(\yM\) from a sample
space \(\SampleSpace\), and we believe that the process that produced
\(\yM\) may be captured sufficiently by the \(d\)-dimensional vector
statistic \(\gV:\SampleSpace\mapsto \Reals^d\). A natural probability
model for such an object---one that minimises the additional assumptions
made in the sense of maximising entropy---is the exponential family
class of models \citep{Ja57i}. If \(\yM\) modelled via a discrete or a
continuous distribution, the probability mass function or density has
the form, parametrized by the \emph{canonical parameter}
\(\thetaV\in \Reals^d\) \citep[ among others]{barndorffnielsen1978}, \[
p_{\thetaV; \SampleSpace, h, \gV}(\yM) = h(\yM) 
\frac{\exp\{\thetaV\T\gV(\yM)\}}{\kappa_{\SampleSpace, h, \gV}(\thetaV)},\ \yM\in\SampleSpace.\label{eq:ef}
\] The form is then a product of a function \(h(\cdot)\) of the data
alone specifying the distribution under \(\thetaV=\0V\), a function
\(\kappa(\cdot)\) of the parameter alone, defined below, and an
exponential with the data's and parameter's interaction. The exact form
of \(\kappa(\thetaV)\) depends on the type of distribution:
\bse\label{eq:normc} \begin{align}
&\text{discrete:}&\kappa_{\SampleSpace, h, \gV}(\thetaV) &= 
\sum_{\yM'\in \SampleSpace} h(\yM) \exp\{ \thetaV\T \gV (\yM') \}\label{eq:normc-discrete}\\
&\text{continuous:}&\kappa_{\SampleSpace, h, \gV}(\thetaV) &= \int_{\SampleSpace}
h(\yM) \exp\{ \thetaV\T \gV (\yM') \} d\yM'.\label{eq:normc-continuous}\end{align} \ese (A full measure-theoretic formulation is also possible
for distributions that are neither discrete nor continuous; and a
\(\thetaV\in \Reals^q\) for \(q\le d\) may be mapped through a vector
function \(\etaV: \Reals^q\mapsto \Reals^d\).) Likelihood-based
inference for exponential-family models centers on the log-likelihood
function \begin{equation}\label{eq:Loglikelihood}
\ell(\thetaV) = \thetaV\T \gV (\yM_{\obs }) - \log \kappa_{\SampleSpace, h, \gV} (\thetaV ).
\end{equation} Likelihood calculations can be highly computationally
challenging when the sum or integral \eqref{eq:normc} is intractable
\citep[e.g.,][]{geyer1992}.

For the sake of brevity, we will omit ``\(\SampleSpace, h, \gV\)'' from
the subscript of \(p_{\thetaV; \SampleSpace, h, \gV}(\yM)\) and
\(\kappa_{\SampleSpace, h, \gV}(\thetaV)\) for the remainder of this
paper, unless they differ from these defaults.

Most commonly, exponential families are used as a tool for deriving
inferential properties of families that belong to this class, albeit
with a different parametrization. For a common example, the canonical
parameter of the \(\Normal(\mu,\sigma^2)\) distribution in its
exponential family form is \(\thetaV=[\mu\sigma^{-2},\sigma^{-2}]\),
which is far less convenient and interpretable.

However, there are domains in which an exponential family model is
specified directly through its sufficient statistics, often with the
help of the Hammersley--Clifford Theorem \citep[among others]{Be74s}.
These include the Strauss spatial point processes \citep{St75m}, the
Conway--Maxwell--Poisson model for count data \citep{ShMi05u}, and
exponential-family random graph models (ERGMs) \citep[ for original
derivations]{HoLe81e, FrSt86m} for networks and other relational data.
While our development is motivated by and focuses on ERGMs, it applies
to other scenarios involving exponential-family models with intractable
normalizing constants because the methods we discuss operate on the
sufficient statistic \(\gV(\yM)\) rather than on the original data
structure.

Consider a graph \(\yM\) on \(n\) vertices, with the vertices being
labelled \(1,\dotsc,n\). We will focus on binary undirected graphs with
no self-loops and no further constraints---a discrete distribution.
Thus, we can express
\(\SampleSpace=2^{\{\{i,j\}\in \{1,\dotsc,n\}^2: i\ne j\}}\), the power
set of the set of all distinct unordered pairs of vertex indices.

An exponential family on such a sample space is called an
exponential-family random graph model (ERGM). Substantively, elements of
\(\gV(\yM)\) then represent features of the graph---e.g., the number of
edges or other structures, or connections between exogenously defined
groups---whose prevalence is hypothesised to influence the relative
likelihoods of different graphs.

For example, an edge count statistic would, through its corresponding
parameter, control the relative probabilities of sparser and denser
graphs, and in turn the expected density of the graph, conditional on
other statistics. Some of the graph statistics, the most familiar being
the number of triangles, or ``friend of a friend is a friend,''
configurations, induce stochastic dependence among the edges. The
triangle statistic in particular is problematic for reasons reviewed by
a number of authors \citep[e.g., Section 3.1 of][]{ScKr20e} and has been
largely superseded by statistics proposed by \citet{SnPa06n} and others.

For our special case, \(\kappa(\thetaV)\) has the form
\eqref{eq:normc-discrete}, a summation over all possible graphs. For the
population of binary, undirected, vertex-labelled graphs with no
self-loops, the cardinality of \(\SampleSpace\) is \(2^{\binom{n}{2}}\),
a number exponential in the square of the number of vertices. Thus, even
for small networks, this sum is intractable. For example, for \(n=10\),
summation is required of
\(\abs{\SampleSpace} \approx \ensuremath{3.5\times 10^{13}}\)
elements---too many to compute by ``brute force.'' Under some choices of
\(\gV(\cdot)\), the summation \eqref{eq:normc-discrete} simplifies, but
for most interesting models---those involving complex dependence among
relationships in the network represented by the graph---maximization of
\(\ell(\thetaV)\) can be an enormous computational challenge.

Various authors have proposed methods for approximately maximizing the
log-likelihood function. For instance, \citet{snijders2002} introduced a
Robbins--Monro algorithm \citep{RoMo51s} based on the fact that when
\(\hat{\thetaV}\) denotes the maximum likelihood estimator (MLE), \[
\Expectation_{\hat{\thetaV}}[\gV (\yM)] = 
\gV (\yM_{\obs}).
\] Alternatively, the MCMC MLE idea of \citet{geyer1992} was adapted to
the ERGM framework by, among others, \citet{hummel2012}. This idea is
based on the fact that the log-likelihood-ratio \[
\lambda(\thetaV,\thetaV_0) \equiv \ell(\thetaV) - \ell(\thetaV_{0}) = (\thetaV-\thetaV_{0})\T \gV (\yM_{\obs}) - 
\log \Expectation_{\thetaV_0}[\exp\{(\thetaV - \thetaV_{0})\T \gV ({\SampleSpace})\} ],
\] suggesting that if we sample \(r\) networks
\(\yM_{1},\dotsc,\yM_{r}\) from the approximate distribution
\(p_{\thetaV_{0}}(\cdot)\) via MCMC, we can employ an estimator
\begin{equation}\label{eq:LoglikApprox}
\hat{\lambda}(\thetaV,\thetaV_0) \equiv (\thetaV - \thetaV_{0})\T \gV (\yM _{\obs }) - 
\log\left[\frac{1}{r} \sum_{i=1}^r \exp\{(\thetaV - \thetaV_{0})\T \gV ({\yM_i})\}\right]
\end{equation} as an approximation to \(\lambda(\thetaV, \thetaV_{0})\).

In some cases, the network may be partially unobserved, i.e.,
\(\yM_{\obs }\) is not a complete network in \(\SampleSpace\). Letting
\(\SampleSpace(\yM_{\obs})\) denote the set of all networks in
\(\SampleSpace\) that coincide with \(\yM_{\obs}\) wherever
\(\yM_{\obs}\) is observed, we may generalize \eqref{eq:Loglikelihood}
as in \citet{handcock2010} by writing \begin{equation}\label{eq:Loglik2}
\ell(\thetaV) =  \log\sum_{\yM'\in \SampleSpace(\yM_{\obs})} p_{\thetaV}(\yM') =  
\log \kappa_{\SampleSpace(\yM_{\obs})}(\thetaV ) - \log \kappa_{\SampleSpace}(\thetaV ).
\end{equation} Equation \eqref{eq:Loglik2} generalizes
\eqref{eq:Loglikelihood} because \(\SampleSpace(\yM_{\obs})\) consists
of the singleton \(\{\yM_{\obs}\}\) when the network is fully observed.

This likelihood may be approximated using the approach of
\citet{GeCa93m} and \citet{Ge94c}: Draw samples
\(\yM_{1},\dotsc,\yM_{r}\) and \(\zM_{1},\dotsc,\zM_{s}\) from
\(\SampleSpace\) and \(\SampleSpace(\yM_{\obs})\), respectively via
MCMC, such that the stationary distributions are
\(p_{\thetaV_0;\SampleSpace}(\cdot)\) and
\(p_{\thetaV_0;\SampleSpace(\yM_{\obs})}(\cdot)\), respectively. The
generalization of \eqref{eq:LoglikApprox} is then
\begin{multline}\label{eq:LoglikApprox2}
\hat{\lambda}(\thetaV,\thetaV_{0}) \equiv 
\log\left[\frac{1}{s} \sum_{i=1}^s \exp\{(\thetaV - \thetaV_{0})\T \gV ({\zM_i})\}\right]\\-
\log\left[\frac{1}{r} \sum_{i=1}^r \exp\{(\thetaV - \thetaV_{0})\T \gV ({\yM_i})\}\right].
\end{multline}

It seems that \eqref{eq:LoglikApprox2} allows us to find an approximate
maximum likelihood estimator by simply maximizing over \(\thetaV\). Yet
there are problems with this approach, and in this paper we focus on one
problem in particular: It is not always the case that
\eqref{eq:LoglikApprox2} has a maximizer, nor even an upper bound.

To characterize precisely when the approximation of
\eqref{eq:LoglikApprox2} has a maximizer, we first define a term that
will be important throughout this article: The \emph{convex hull} of any
set of points in \(d\)-dimensional Euclidean space is the smallest
convex set containing that set, i.e., the intersection of all convex
sets containing that set. According to exponential family theory
\citep[e.g., Theorem 9.13 of][]{barndorffnielsen1978},
\(\hat{\lambda}(\thetaV,\thetaV_0)\) in \eqref{eq:LoglikApprox} has a
maximizer if and only if \(\gV(\yM_\obs)\) is contained in the
\emph{interior} of the convex hull of
\(\{\gV(\yM_1), \dotsc, \gV(\yM_r)\}\). Indeed, it is straightforward to
show that \(\hat{\lambda}(\thetaV,\thetaV_0)\) in the more general
expression \eqref{eq:LoglikApprox2} does not have a maximum if
\emph{any} of \(\{\gV(\zM_1), \dotsc, \gV(\zM_s)\}\) lies outside the
convex hull of \(\{\gV(\yM_1), \dotsc, \gV(\yM_r)\}\). We prove this in
\secref{sec:LinearProgram}.

The question of determining when a point lies within the convex hull of
another set of points is thus relevant when using approximations
\eqref{eq:LoglikApprox} and \eqref{eq:LoglikApprox2}. The remainder of
this article shows how to determine whether a given point or set of
points lies inside a given convex hull using linear programming and
explains how the \texttt{ergm} package for R exploits this fact, then
develops improvements to the algorithm and extending them to the case
where a network is only partially observed.

\hypertarget{sec:LinearProgram}{\section{Convex hull testing as a linear
program}\label{sec:LinearProgram}}

We introduce the terms \emph{target set} and \emph{test set} to refer to
the set \(\tS=\{\gV (\yM_{1}),\dotsc,\gV (\yM_{r})\}\) and the set
\(\sS=\{\gV (\zM_{1}),\dotsc,\gV (\zM_{s})\}\), respectively. To
reiterate, the convex hull of any set of \(d\)-dimensional points is the
smallest convex set containing that set, and the convex hull is always
closed in \(\Reals^d\). The interior of this convex hull will play a
special role here, so we let \({\ConvexHull(\tS)}\) denote the interior
of the convex hull of \(\tS\). If the points in \(\tS\) satisfy a linear
constraint, \({\ConvexHull(\tS)}\) is empty since in this case the
convex hull lies in an affine subspace of dimension smaller than \(d\).
We assume here that \({\ConvexHull(\tS)}\) is nonempty, which in turn
means that the convex hull is the closure of \({\ConvexHull(\tS)}\),
which we denote by \(\overline{\ConvexHull(\tS)}\).

As mentioned near the end of \secref{sec:intro}, the approximation in
\eqref{eq:LoglikApprox2} fails to admit a maximizer whenever
\(\sS\not\subset\overline{\ConvexHull(\tS)}\), which may be proved quite
simply:

\begin{proposition}
If \(\sS\not\subset\overline{\ConvexHull(\tS)}\), then
\(\sup_{\thetaV} \hat{\lambda}(\thetaV,\thetaV_{0}) = \infty\).

\end{proposition}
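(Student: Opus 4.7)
The plan is to exploit a separating hyperplane together with the explicit formula \eqref{eq:LoglikApprox2} to drive $\hat\lambda$ to $+\infty$ along a suitably chosen ray in parameter space. Since the hypothesis says some test point lies outside the closed convex hull of the target, we expect to find a direction $\vV$ in which every target statistic $\gV(\yM_i)$ has strictly smaller inner product than at least one test statistic $\gV(\zM_k)$. Moving $\thetaV$ far in that direction then makes the numerator of \eqref{eq:LoglikApprox2} dominate an exponential in $t$ with a strictly larger rate than the denominator, producing unbounded growth.

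Concretely, I would proceed as follows. First, pick $\zM_k \in \{\zM_1,\dotsc,\zM_s\}$ with $\gV(\zM_k) \notin \overline{\ConvexHull(\tS)}$. Because $\overline{\ConvexHull(\tS)}$ is a closed convex set and $\gV(\zM_k)$ lies outside it, the strict form of the separating hyperplane theorem yields $\vV \in \Reals^d$ and $\alpha \in \Reals$ with $\vV\T \gV(\zM_k) > \alpha$ and $\vV\T \yM \le \alpha$ for every $\yM \in \overline{\ConvexHull(\tS)}$; in particular $\vV\T \gV(\zM_k) > \vV\T \gV(\yM_i)$ for each $i=1,\dotsc,r$. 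Set
\[
\delta \;=\; \vV\T \gV(\zM_k) \;-\; \max_{1\le i\le r} \vV\T \gV(\yM_i) \;>\; 0,
\]
and define the parameter ray $\thetaV_t = \thetaV_0 + t\vV$ for $t > 0$.

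Second, I would substitute $\thetaV_t$ into \eqref{eq:LoglikApprox2} and bound each log-sum-exp separately. For the numerator, dropping all but the $k$-th summand gives
\[
\frac{1}{s}\sum_{i=1}^s \exp\{t\,\vV\T \gV(\zM_i)\} \;\ge\; \frac{1}{s}\exp\{t\,\vV\T \gV(\zM_k)\}.
\]
For the denominator, bounding every summand by the maximum yields
\[
\frac{1}{r}\sum_{i=1}^r \exp\{t\,\vV\T \gV(\yM_i)\} \;\le\; \exp\bigl\{t\max_i \vV\T \gV(\yM_i)\bigr\}.
\]
Taking logarithms and differencing produces $\hat\lambda(\thetaV_t,\thetaV_0) \ge -\log s + t\delta$, which tends to $+\infty$ as $t\to\infty$, so $\sup_{\thetaV}\hat\lambda(\thetaV,\thetaV_0) = \infty$.

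The only non-routine ingredient is the strict separation step, and the potential pitfall is making sure it is truly strict rather than merely weak: one must separate a point from a closed convex set, not just from its interior, which is why we need $\overline{\ConvexHull(\tS)}$ (a finite set's convex hull is already closed, so this is automatic here) and a point genuinely outside it. Once that is in hand the remainder is a direct computation from \eqref{eq:LoglikApprox2}, with no further technical obstacles.
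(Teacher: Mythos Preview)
Your proposal is correct and follows essentially the same approach as the paper: both arguments pick a test point outside the closed convex hull, invoke a separating hyperplane to obtain a direction $\vV$ with $\vV\T\gV(\zM_k) > \vV\T\gV(\yM_i)$ for every $i$, move along the ray $\thetaV_0 + t\vV$, and bound the two log-sum-exp terms (dropping all but one numerator summand, majorizing the denominator) to force $\hat\lambda\to\infty$. The only cosmetic difference is that the paper passes the hyperplane through $\gV(\zM_1)$ and works with $\exp\{\hat\lambda\}$, whereas you keep the standard strict-separation form and write the final bound $\hat\lambda(\thetaV_t,\thetaV_0)\ge -\log s + t\delta$ directly; the substance is identical.
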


\begin{proof}
Suppose there exists an element of \(\sS\), say \(\gV (\zM_1)\), in the
open set \(\Reals^d \setminus \overline {\ConvexHull(\tS)}\). By the
convexity of \(\overline {\ConvexHull(\tS)}\), this means there exists a
hyperplane \({\cal H}\)---that is, an affine \((d-1)\)-dimensional
subspace---containing \(\gV (\zM_1)\) and such that
\({\cal H} \cap \overline {\ConvexHull(\tS)} = \emptyset\). In other
words, there exist a scalar \(z _ 0\) and a \(d\)-vector \(\zV\) such
that \(z _ 0 + \zV\T\gV(\zM _ 1)=0\) and
\(z _ 0 + \zV\T\gV(\yM _ i) < 0\) for \(i = 1, \dotsc, r\).

If we now let \(\thetaV = \thetaV _ 0 + \alpha \zV\) for a real number
\(\alpha\), \eqref{eq:LoglikApprox2} gives \begin{align*}
\exp\{\hat{\lambda}(\thetaV,\thetaV_0)\}
&=\frac{r}{s} \times \frac{\exp\{\alpha\zV\T \gV ({\zM_1})\} + \sum_{i=2}^s 
\exp\{\alpha\zV\T \gV ({\zM_i})\} } {\sum_{i=1}^r \exp\{\alpha\zV \T \gV ({\yM_i})\}}\\
&\ge\frac{r}{s} \times \frac{1} {\sum_{i=1}^r \exp[\alpha\zV \T \{\gV({\yM_i})-\gV 
({\zM_1})\}]}.
\end{align*} Since \(z _ 0 + \zV\T\gV(\zM _ 1)=0\) and
\(z _ 0 + \zV\T\gV(\yM _ i) < 0\) imply that
\(\zV \T \{\gV({\yM_i})-\gV({\zM_1}) \} < 0\) for \(i = 1, \dotsc, r\),
every term in the denominator goes to zero as \(\alpha\to\infty\).

\end{proof}

However, \(\sS\subset\overline{\ConvexHull(\tS)}\) is merely a
necessary, not sufficient, condition for \eqref{eq:LoglikApprox2} to
have a unique maximizer. Another necessary condition is that
\(\frac{\partial^2\hat{\lambda}(\thetaV,\thetaV_0)}{\partial\thetaV}\)
must be negative-definite for all \(\thetaV\) and \(\thetaV_0\), but it
is not sufficient either, since a direction of recession \citep[for
example]{Ge09l} may exist. We are thus not aware of a necessary and
sufficient condition when \(\sS\) contains more than one point. As we
see below, for our purposes, the conditions that
\(\sS\subset\ConvexHull(\tS)\) and that
\(\Variance(\tS)-\Variance(\sS)\) is positive-definite suffice.

We now demonstrate that linear programming can provide a method of
testing definitively whether \(\sS\subset\overline{\ConvexHull(\tS)}\).
As a practical matter, we can apply this method to a slightly perturbed
version of \(\sS\) in which each test point is expanded away from the
centroid of \(\tS\) by a small amount; if the perturbed \(\sS\) is
contained in \(\overline{\ConvexHull(\tS)}\), then
\(\sS\subset{\ConvexHull(\tS)}\). This approach has the added benefit of
ensuring not only that \(\sS\subset{\ConvexHull(\tS)}\) but that each
point in \(\sS\) is bounded away from the boundary of
\({\ConvexHull(\tS)}\) by a small amount, which can prevent some
computational challenges in using the approximation in
\eqref{eq:LoglikApprox2}. Yet we will also show, in
\secref{sec:reformulation}, that we may transform the linear program to
test directly whether \(\sS\subset\ConvexHull(\tS)\).

In the particular case in which \(\yM_{\obs}\) is a full network, i.e.,
there are no missing data, \(\sS\) is the singleton
\(\{ \gV(\yM _{\obs})\}\). This is the case considered by
\citet{hummel2012}, who propose an approximation method that checks
\(\sS\subset{\ConvexHull(\tS)}\) and then replaces \(\gV(\yM_{\obs})\)
by some other point contained in \({\ConvexHull(\tS)}\) whenever
\(\sS\not\subset{\ConvexHull(\tS)}\). In particular, the method defines
a ``pseudo-observation'', \(\hat{\xiV}\), as the convex combination
\(\gamma \gV(\yM_{\obs}) + (1 - \gamma) \overline{\tV}\), where
\(\overline{\tV}\) is the centroid of \(\tM\) and \(\gamma \in (0, 1]\).
A previous implementation of the algorithm in the \texttt{ergm} package
\citep{hunter2008} chooses the largest \(\gamma\) value such that
\(\hat{\xiV} \in {\ConvexHull(\tS)}\) via a grid search on \((0,1]\),
then maximizes the resulting approximation to yield a new parameter
vector \(\tilde{\thetaV}\), which then replaces \(\thetaV_0\) in
defining the MCMC stationary distribution, and the process repeats until
\(\gV(\yM_{\obs}) \in {\ConvexHull(\tS)}\) for two consecutive
iterations. The current \texttt{ergm}-package implementation of the
algorithm uses a bisection search method that incorporates a prior
belief on the value of \(\gamma\) using the same condition that
\(\hat{\xiV} \in {\ConvexHull(\tS)}\).

One downside of these algorithms is that they are all
``trial-and-error'' algorithms requiring multiple checks of the
condition \(\hat{\xiV} \in {\ConvexHull(\tS)}\). In this article, we
show how to eliminate this ``trial-and-error'' approach.

We first frame the check of whether
\(\pV\in\overline{\ConvexHull(\tS)}\), for an arbitrary column vector
\(\pV\in\Reals^d\), as a linear program. Let \(\mM\) be the
\((r\times d)\)-dimensional matrix whose rows \(\mM_{1},\dotsc,\mM_{r}\)
are the points in the target set \(\tS\); furthermore, let
\(\ConvexHull(\mM)\) denote \(\ConvexHull(\tS)\). Because
\(\overline{\ConvexHull(\mM)}\) is convex in \(\Reals^d\), for any
\(\pV \notin \overline{\ConvexHull(\mM)}\), we may find an affine
\((d-1)\)-dimensional subspace, which we call a \emph{separating
hyperplane}, that separates the points in \(\tS\) from the point \(\pV\)
in the sense that \(\pV\) lies in one (open) half-space defined by the
hyperplane and the points in \(\tS\)---i.e., the rows of \(\mM\)---all
lie in the other (closed) half-space. Mathematically, this separating
hyperplane is determined by the affine subspace
\(\{\xV\in\Reals^{d}:z_0 + \xV\T\zV =0\}\) for some scalar \(z_0\) and
\(d\)-vector \(\zV\). Thus, \(\zV\in\Reals^d\) and \(z_0\in\Reals\)
determine a separating hyperplane whenever \(z_0 + \pV\T \zV < 0\) while
\(z_0 + \mM_i \zV \ge 0\) for each row \(\mM_i\) of \(\mM\).

In other words, a hyperplane separating \(\pV\) from
\(\overline{\ConvexHull(\mM)}\) exists whenever the minimum value of
\(z_0 +\pV \T\zV\) is strictly negative, where the minimum is taken over
all \(z_0\in\Reals\) and \(\zV =(z_1, \dotsc, z_d)\T \in \Reals^d\) such
that \(\mM_i\zV \ge -z_0\) for \(i=1,\dotsc,n\). Notationally, we will
use inequality symbols to compare vectors componentwise, so for example
instead of ``\(\mM_i\zV \ge -z_0\) for \(i=1,\dotsc,n\)'', we may write
``\(\mM\zV \ge -z_0\1V\)''. The existence of a separating hyperplane can
thus be determined using the following linear program:
\begin{equation}\label{LPoriginal}
\begin{aligned}
  & \text{minimize: } & & z_0 + \pV\T\zV \\
   & \text{subject to: }& \quad &  \begin{aligned}[t] \mM\zV & \ge -z_0 \1V  \\
                  \zV  & \ge -\1V \\
                  \zV & \le \1V.
                  \end{aligned}
                  \end{aligned}
\end{equation} If the minimum value of the objective function
\(z_0+\pV \T\zV\) is exactly 0---it can never be strictly positive
because \((z_0,\zV)=\0V_{d+1}\) determines a feasible point---then
\(\pV\) is in \(\overline{\ConvexHull(\mM)}\), the closure of
\(\ConvexHull(\mM)\). If the minimum value is strictly negative, then a
separating hyperplane not containing \(\pV\) exists and thus
\(\pV \not\in\overline{\ConvexHull(\mM)}\). The bounds
\(-\1V \le \zV \le \1V\), which we call \emph{box constraints}, ensure
that the linear program has a finite minimizer when
\(\pV \not\in\overline{\ConvexHull(\mM)}\): If \(\zV\) is unconstrained
and there exists a feasible \(z_0\), \(\zV\) with
\(z_0 + \pV\T\zV < 0\), then \((cz_0) + (c\pV)\T\zV < z_0 + \pV\T\zV\)
for any \(c>1\) and no finite minimizer exists. The box constraints are
implemented in the current \texttt{ergm} package version of the convex
hull check; however, in \secref{sec:reformulation} we show how to
eliminate them entirely, leading to a more streamlined linear program.

The current implementation of the algorithm in the \texttt{ergm} package
uses the interface package \texttt{lpSolveAPI} to solve the linear
program of \ref{LPoriginal}, which we refer to as ``\lpref{LPoriginal}''
henceforth.

In case the test set \(\sS\) contains more than one point, we may test
each point individually to determine whether each point in \(\sS\) is
contained in \(\overline{\ConvexHull(\mM)}\).

\hypertarget{reformulating-the-single-test-point-algorithm}{\section{\texorpdfstring{Reformulating the single-test-point algorithm
\label{sec:reformulation}}{Reformulating the single-test-point algorithm }}\label{reformulating-the-single-test-point-algorithm}}

The trial-and-error algorithm of \citet{hummel2012}, which applies only
to the case where the test set \(\sS\) consists of the single point
\(\pV\), never exploits the fact that the separating hyperplane, if it
exists, is the minimizer of a function. We propose to improve this
algorithm by reformulating the linear program and then using the
minimizers found. First, we establish a few basic facts that help us
simplify the problem.

Since translating \(\pV\) and all points in \(\tS\) by the same constant
\(d\)-vector does not change whether
\(\pV\in\overline{\ConvexHull(\mM)}\), we assume throughout this section
that all points are translated so that the origin is a point in
\(\ConvexHull(M)\) that we refer to as the ``centroid'' of \(\mM\). In
practice, we often take the centroid of \(\mM\) to be the sample mean of
the points in the target set. Yet none of the results of this section
rely on any particular definition of ``centroid''; here, we assume only
that the centroid \(\0V\in\Reals^d\) lies in the interior of the convex
hull of the target set.

Since \(\pV\) may be assumed not to coincide with the centroid, else the
question of whether \(\pV\in\ConvexHull(M)\) is answered immediately, at
least one of the coordinates of \(\pV\) is nonzero. We assume without
loss of generality that \(p_1\ne0\); otherwise, we may simply permute
the coordinates of all the points without changing whether
\(\pV\in\overline{\ConvexHull(\mM)}\). Below, we define a simple
invertible linear transformation that maps \(\pV\) to the unit vector
\(\eV_1=(1, 0, \dotsc, 0)\T\). Applying this transformation allows us to
simplify the original linear program and thereby clarifies our
exposition while facilitating establishing results; then, once we have
proved our results, we will apply the inverse transformation to restore
the original coordinates, achieving a simplification of the linear
program in the process.

One important fact about any invertible linear transformation, when
applied to \(\pV\) and \(\mM\), is that it does not change whether or
not \(\pV \in\overline{\ConvexHull(\mM)}\), as we now prove.

\begin{lemma}
If \(\rM \in \Reals^{d\times d}\) has full rank, then
\(\pV \in \overline{\ConvexHull(\mM)}\) if and only if
\(\rM\pV \in \overline{\ConvexHull(\mM\rM\T)}\).

\end{lemma}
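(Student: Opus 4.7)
The plan is to establish the set equality $\rM\,\overline{\ConvexHull(\mM)} = \overline{\ConvexHull(\mM\rM\T)}$, from which the biconditional is immediate: $\pV\in\overline{\ConvexHull(\mM)}$ iff $\rM\pV$ lies in the image of $\overline{\ConvexHull(\mM)}$ under $\rM$, which by the set equality is exactly $\overline{\ConvexHull(\mM\rM\T)}$. The reverse implication uses $\rM\inv$ in the same way, which exists because $\rM$ has full rank.

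To prove the set equality, I would work at the level of rows. Writing $\mM_i\T\in\Reals^d$ for the $i$-th row of $\mM$ viewed as a column vector, $\overline{\ConvexHull(\mM)}$ equals the set of convex combinations $\sum_{i=1}^r \alpha_i \mM_i\T$ with $\alpha_i\ge 0$ and $\sum_i \alpha_i=1$. The $i$-th row of $\mM\rM\T$ is $\mM_i\rM\T$, whose transpose is $\rM\mM_i\T$, so $\overline{\ConvexHull(\mM\rM\T)}$ is the set of convex combinations of the vectors $\rM\mM_1\T,\dotsc,\rM\mM_r\T$. The linearity identity
\[
\rM\sum_{i=1}^r \alpha_i \mM_i\T = \sum_{i=1}^r \alpha_i(\rM\mM_i\T)
\]
then establishes the set equality directly. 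The closure operation requires no extra care, because the convex hull of finitely many points in $\Reals^d$ is already closed; and in any event $\rM$ is a homeomorphism of $\Reals^d$, so it commutes with topological closures.

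The main obstacle is bookkeeping rather than any technical depth: because the rows of $\mM$ represent points in $\Reals^d$ while $\rM$ naturally acts on the left on column vectors, one must carefully track that left-multiplication by $\rM$ on column-vector points corresponds to right-multiplication by $\rM\T$ on the row-oriented matrix $\mM$. Once this transpose convention is handled correctly, the proof reduces to the linearity of $\rM$ together with the standard fact that invertible linear maps on $\Reals^d$ are homeomorphisms.
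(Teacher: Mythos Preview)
Your argument is correct and follows essentially the same route as the paper: both write a point of $\overline{\ConvexHull(\mM)}$ as a convex combination of the rows of $\mM$ and use linearity of $\rM$ to carry that convex combination to one of the rows of $\mM\rM\T$, with the converse handled by $\rM\inv$. Your remarks on closure and on the row/column transpose bookkeeping are a bit more explicit than the paper's, but the underlying idea is identical.
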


\begin{proof}
Any point in the convex hull of a set may be written as a convex
combination of points in that set. Suppose
\(\pV \in \overline{\ConvexHull(\mM)}\). Then there exists
\(\aV \in \Reals^r\) such that \(\pV\T = \aV\T\mM\) and
\(\sum_{i=1}^r a_i = 1\). Since \((\rM\pV)\T =\aV\T\mM\rM\T\), we know
that \(\rM\pV\in \overline{\ConvexHull(\mM\rM\T)}\). Conversely, if
\(\rM\pV \in \overline{\ConvexHull(\mM\rM\T)}\), the same reasoning
applied to the full-rank transformation \(\rM\inv\) shows that
\(\pV \in \overline{\ConvexHull(\mM)}\).

\end{proof}

Consider the full-rank linear transformation defined by
\begin{equation}\label{R}
\rM = \begin{bmatrix}
\frac{1}{p_1} & 0 & \cdots & 0 \\
\frac{-p_2}{p_1} & 1 & \cdots & 0 \\
\vdots & \vdots & \ddots & 0 \\
\frac{-p_d}{p_1} & 0 & \cdots & 1 \\
\end{bmatrix} =
\begin{bmatrix}
p_1 & 0 & \cdots & 0 \\
p_2 & 1 & \cdots & 0 \\
\vdots & \vdots & \ddots & 0 \\
p_d & 0 & \cdots & 1 \\
\end{bmatrix}^{-1}
\end{equation} that maps \(\pV\) to the standard basis vector
\(\eV_1 \in \Reals^d\). After applying this linear transformation, we
want to know whether a hyperplane exists that separates the point
\(\eV_1\) from the rows of \(\mM \rM\T\).

Such a hyperplane, if it exists, may be written as
\(\{\xV\in\Reals^d: z_0 + \xV\T\zV = 0\}\) for some \(z_0\in\Reals\) and
\(\zV\in\Reals^d\). Since the origin is an interior point of
\(\ConvexHull(\mM \rM\T)\), no separating hyperplane may pass through
the origin. Therefore, we may limit our search to those cases where
\(z_0\ne0\), which means that we may divide by \(z_0\) the equation
defining the hyperplane. Stated differently, we may take \(z_0=1\) and
rewrite our hyperplane as \(\{\xV\in\Reals^d: 1 + \xV\T\zV = 0\}\)
without loss of generality.

Summarizing the arguments above, we now seek a \(d\)-vector \(\zV\)
satisfying \(1+\eV_1\T\zV<0\) and \(1 + (\mM \rM\T)_i \zV \ge0\) for
\(i=1,\dotsc,r\). Furthermore, fixing \(z_0=1\) means that no box
constraints are necessary in the reformulation of the linear program
designed to search for a separating hyperplane. Therefore, our new
linear program, after transformation by \(\rM\), becomes
\begin{equation}
\label{LPalternative}
\begin{aligned}
&\text{minimize: } & z_1 & \\
&\text{subject to: }& [\mM\rM\T]\zV &\ge -\1V. \\
\end{aligned}
\end{equation} The zero vector is a feasible point in both
\lpref{LPoriginal} and \lpref{LPalternative}. However, unlike
\lpref{LPoriginal}, zero cannot be the optimum of \lpref{LPalternative};
that is, the solution of \lpref{LPalternative} always defines a
hyperplane, whether or not it is a separating hyperplane. We now prove
this fact and show how to determine whether
\(\eV_1\in\overline{\ConvexHull(\mM\rM\T)}\):

\begin{proposition}
Let \(\zV^*\) denote a minimizer of \lpref{LPalternative}. Then
\(z_1^*<0\), and \(\pV\in\overline{\ConvexHull(\mM\rM\T)}\) if and only
if \(-1\le z_1^*\). Furthermore, the ray with endpoint at the origin and
passing through \(\eV_1\) intersects the boundary of
\(\overline{\ConvexHull(\mM\rM\T)}\) at \(-(1/z_1^*)\eV_1\).

\end{proposition}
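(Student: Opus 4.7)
The plan is to establish the three claims of the proposition in turn, leveraging as the key structural fact that the LP \lpref{LPalternative}, despite lacking box constraints, has a finite minimum because $\0V$ lies in the interior of $\ConvexHull(\mM\rM\T)$. I interpret the statement ``$\pV\in\overline{\ConvexHull(\mM\rM\T)}$'' in the natural way---as $\eV_1=\rM\pV\in\overline{\ConvexHull(\mM\rM\T)}$---which, by the Lemma, is equivalent to $\pV\in\overline{\ConvexHull(\mM)}$.

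First I would show $z_1^*<0$. Since $\zV=\0V$ is feasible with objective $0$, we already have $z_1^*\le 0$. For the strict inequality, I would use the interior assumption: because the finitely many rows $[\mM\rM\T]_i$ are uniformly bounded, the perturbation $\zV=-\epsilon\eV_1$ is feasible for all sufficiently small $\epsilon>0$, so $z_1^*\le -\epsilon<0$. In parallel I would establish that the LP is bounded below: since $\0V$ is interior to $\ConvexHull(\mM\rM\T)$, there exists $\delta>0$ such that $\delta\eV_1\in\overline{\ConvexHull(\mM\rM\T)}$; writing $\delta\eV_1$ as a convex combination of the rows of $\mM\rM\T$ and applying that combination to the constraint $[\mM\rM\T]\zV\ge-\1V$ yields $\delta z_1\ge -1$ for every feasible $\zV$. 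Hence the LP attains a finite minimum.

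Next, for the iff characterization $\eV_1\in\overline{\ConvexHull(\mM\rM\T)}\iff z_1^*\ge -1$: as noted in the paragraphs preceding \lpref{LPalternative}, the interior assumption forces any separating hyperplane to have $z_0\ne 0$, so we may normalize to $z_0=1$ without loss of generality. Such a normalized hyperplane $\{\xV:1+\xV\T\zV=0\}$ separates $\eV_1$ from the convex hull iff $\zV$ is feasible for \lpref{LPalternative} and $1+z_1<0$. By the standard separating hyperplane theorem for the closed convex set $\overline{\ConvexHull(\mM\rM\T)}$, such a separator exists iff $\eV_1\notin\overline{\ConvexHull(\mM\rM\T)}$. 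Combining, $\eV_1\notin\overline{\ConvexHull(\mM\rM\T)}\iff z_1^*<-1$, which is the contrapositive of the claim.

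Finally, for the ray intersection, I would repeat the argument of the previous paragraph with $t\eV_1$ in place of $\eV_1$ for each $t>0$: the point $t\eV_1$ lies in $\overline{\ConvexHull(\mM\rM\T)}$ iff no feasible $\zV$ achieves $1+tz_1<0$, iff $tz_1^*\ge -1$, iff $t\le -1/z_1^*$ (using $z_1^*<0$ from Step 1). Thus the ray $\{t\eV_1:t\ge 0\}$ meets the convex hull exactly on $[0,-1/z_1^*]$ and crosses its boundary at $-(1/z_1^*)\eV_1$. The main obstacle I anticipate is the careful bookkeeping in the middle step---specifically, justifying that the $z_0=1$ normalization loses no generality---but this rests precisely on the interior hypothesis, which rules out any separating hyperplane through the origin.
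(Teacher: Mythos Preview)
Your proof is correct and follows essentially the same strategy as the paper: both exploit the interior assumption on $\0V$ to justify the $z_0=1$ normalization, then invoke separating hyperplanes to link the sign of $1+z_1^*$ to membership of $\eV_1$ in $\overline{\ConvexHull(\mM\rM\T)}$.

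The one place where your route differs is the ray-intersection claim. The paper argues directly about the single point $\aV=-(1/z_1^*)\eV_1$, showing by contradiction that $\aV$ cannot be interior (else $(1+\epsilon)\aV$ would violate the constraint $\aV\T\zV^*\ge-1$) and cannot lie outside the closure (else a separating hyperplane for $(1-\epsilon)\aV$ would intersect the positive $x_1$-axis closer to the origin than $-1/z_1^*$, contradicting minimality). You instead parametrize the whole ray and rerun the Step~2 characterization at each $t\eV_1$, obtaining $t\eV_1\in\overline{\ConvexHull(\mM\rM\T)}\iff t\le -1/z_1^*$ in one stroke; the boundary location then drops out. Your version is a bit cleaner and more informative (it identifies the entire intersection of the ray with the hull), while the paper's version is more self-contained at the specific point. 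Your added argument that the LP is bounded below is a nice touch, though strictly speaking the proposition already posits that a minimizer $\zV^*$ exists.
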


\begin{proof}
Since \([\mM\rM\T]\0V=\0V>-\1V\), there exists some \(\epsilon > 0\)
such that \(-\epsilon \eV_1\) satisfies the constraints while reducing
the objective function \(z_1\) below zero, so \(z^*_1\) must be strictly
negative.

Now let \({\cal H} = \{\xV: 1+ \xV\T \zV^* = 0 \}\). Since \(\zV^*\) is
a feasible point, \(\overline{\ConvexHull(\mM\rM\T)}\) lies entirely in
the closed positive half-plane defined by \({\cal H}\). Thus, if
\(z^*_1<-1\), then \(\eV_1\) lies in the open negative half-plane
defined by \({\cal H}\) and so \({\cal H}\) is a separating hyperplane.
Conversely, if a separating hyperplane \(\{\xV: w_0+ \xV\T \wV = 0 \}\)
exists, then \(w_0\) must be strictly positive since \(\0V\) must be
separated from \(\eV_1\) and thus, dividing by \(w_0\), we must have
\((\mM\rM\T) (\wV/w_0) \ge -\1V\) and \(w_1/w_0<-1\).\\
Since \(z^*\) is a minimizer, this implies \(z_1^*\le w_1/w_0 < -1\).

The point \(\aV=-(1/z_1^*)\eV_1\) lies on the positive \(x_1\)-axis,
that is, on the ray with endpoint at the origin that passes through
\(\eV_1\). If \(\aV\) lies in the open set \({\ConvexHull(\mM\rM\T)}\),
then there exists \(\epsilon>0\) such that \((1+\epsilon)\aV\) also lies
in \({\ConvexHull(\mM\rM\T)}\); but this is impossible since
\((1+\epsilon)\aV\T \zV^* = -(1+\epsilon)<-1\) and thus
\((1+\epsilon)\aV\) violates the constraint that must be satisfied by
every point in \({\ConvexHull(\mM\rM\T)}\) due to convexity. On the
other hand, \(\aV\) must lie in \(\overline{\ConvexHull(\mM\rM\T)}\)
since otherwise there exists \(\epsilon>0\) such that
\((1-\epsilon)\aV\not\in \overline{\ConvexHull(\mM\rM\T)}\), which in
turn means that there exists a hyperplane separating
\(\overline{\ConvexHull(\mM\rM\T)}\) from \((1-\epsilon)\aV\) which must
therefore intersect the positive \(x_1\)-axis at a point between the
origin and \((1-\epsilon)\aV\), contradicting the fact that \(z_1^*\) is
the smallest possible such point of intersection. We conclude that
\(\aV=-(1/z_1^*)\eV_1\) must lie on the boundary of
\(\overline{\ConvexHull(\mM\rM\T)}\).

\end{proof}

Applying the full-rank transformation \(\rM^{-1}\), in order to
transform back to the original coordinates, establishes the following
corollary:

\begin{corollary}
\label{cor1}Let \(\zV^*\) denote a minimizer of \lpref{LPalternative}.
Then the ray with endpoint at the origin and passing through \(\pV\)
intersects the boundary of \(\overline{\ConvexHull(\mM)}\) at
\(-(1/z_1^*)\pV\). In particular, \(\pV\) is in \(\ConvexHull(\mM)\) if
and only if \(z_1^*>-1\).

\end{corollary}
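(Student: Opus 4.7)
The plan is to transport the conclusion of the preceding proposition back through the invertible linear map $\rM$. By construction in \eqref{R}, $\rM\pV=\eV_1$, so $\rM\inv\eV_1=\pV$ and more generally $\rM\inv(c\eV_1)=c\pV$ for every scalar $c$. The proposition, read in the transformed coordinates, already states that the ray from the origin through $\eV_1$ meets the boundary of $\overline{\ConvexHull(\mM\rM\T)}$ at $-(1/z_1^*)\eV_1$, and that $\eV_1\in\ConvexHull(\mM\rM\T)$ iff $z_1^*>-1$; the corollary is obtained by applying $\rM\inv$ to this picture.

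The key step is to verify that $\rM\inv$ preserves all three geometric features that appear in the statement: the relevant ray, the convex hull, and its boundary. The Lemma, applied with $\rM$ replaced by $\rM\inv$, already gives that $\rM\inv$ sends $\overline{\ConvexHull(\mM\rM\T)}$ onto $\overline{\ConvexHull(\mM)}$. Since any full-rank linear transformation is a homeomorphism of $\Reals^d$ onto itself, it also sends the interior to the interior and the topological boundary to the boundary. Rays emanating from the origin are preserved by linearity: the set $\{t\eV_1 : t\ge 0\}$ is mapped to $\{t\pV : t\ge 0\}$, the ray from the origin through $\pV$. Consequently, the boundary point $-(1/z_1^*)\eV_1$ identified in the proposition is carried to $-(1/z_1^*)\pV$, which must lie on the boundary of $\overline{\ConvexHull(\mM)}$. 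This establishes the first claim.

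For the ``in particular'' clause, I would argue directly from the geometry just established. The proposition gives $z_1^*<0$, so $t^*\equiv-1/z_1^*>0$ is a well-defined positive parameter locating the boundary crossing at $t^*\pV$. Because the origin (the centroid) is by assumption an interior point of $\ConvexHull(\mM)$, convexity forces the half-open segment $\{t\pV : 0\le t<t^*\}$ to lie in $\ConvexHull(\mM)$ and the ray beyond it, $\{t\pV : t>t^*\}$, to lie outside $\overline{\ConvexHull(\mM)}$. Since $\pV$ itself corresponds to $t=1$ on this ray, $\pV\in\ConvexHull(\mM)$ iff $1<t^*=-1/z_1^*$, and multiplying by the negative quantity $z_1^*$ (reversing the inequality) yields $z_1^*>-1$.

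There is no real obstacle; the only point requiring any care is the observation that the boundary (not merely the closure or the interior) is preserved under $\rM\inv$, which is immediate once one notes that invertible linear maps are homeomorphisms. Everything else is a direct rewriting via the identity $\rM\inv\eV_1=\pV$ and the sign of $z_1^*$.
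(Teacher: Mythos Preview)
Your proposal is correct and follows the same route as the paper, which simply states that applying $\rM\inv$ to transform back to the original coordinates yields the corollary; you have filled in the details the paper leaves implicit (that a full-rank linear map is a homeomorphism and hence carries interior to interior and boundary to boundary, and that the ray through $\eV_1$ is sent to the ray through $\pV$). One small remark: the proposition as stated characterizes membership in the \emph{closure} via $-1\le z_1^*$, not the interior via $z_1^*>-1$, so your first paragraph slightly overstates what it says; but your separate geometric argument for the ``in particular'' clause is self-contained and correct, so this does not affect the validity of your proof.
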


This result gives a way to determine when \(\pV\) lies in
\(\ConvexHull(\mM)\), not merely \(\overline{\ConvexHull(\mM)}\), which
is an improvement over \lpref{LPoriginal}. It also suggests that we may
reconsider \lpref{LPalternative} after transforming by \(\rM\inv\) back
to the original coordinates. Indeed, since \(\zV\) in
\lpref{LPalternative} may take any value in \(\Reals^d\) and \(\rM\) has
full rank, there is no loss of generality in replacing \(\zV\) by
\((\rM\T)\inv\zV\), which allows us to rewrite \lpref{LPalternative} as
\begin{equation}\label{LPretransformed}
\begin{aligned}
&\text{minimize: } & \pV\T\zV & \\
&\text{subject to: }& \mM\zV &\ge -\1V. \\
\end{aligned}
\end{equation} Taking \(\zV^*\) and \(\zV^{**}\) as minimizers of
\lpref{LPalternative} and \lpref{LPretransformed}, respectively, we must
have \(z_1^* = \pV\T\zV^{**}\) since the minimum objective function
value must be the same for the two equivalent linear programs. We
conclude by Corollary \ref{cor1} that \(\pV\) is in \(\ConvexHull(\mM)\)
if and only if \(\pV\T\zV^{**}>-1\) and, furthermore, the point
\(-\pV/(\pV\T\zV^{**})\) is the point in \(\overline{\ConvexHull(\mM)}\)
closest to the origin in the direction of \(\pV\).

\hypertarget{duality}{\section{\texorpdfstring{Duality
\label{sec:duality}}{Duality }}\label{duality}}

This section is a digression in the sense that it does not lead to
improvements in the performance of the algorithms we discuss. Yet it
shows how the convex hull problem leads to a particularly simple
development of the well-known linear programming idea called duality. In
Sections \ref{sec:LinearProgram} and \ref{sec:reformulation}, we derived
linear programs to determine whether a test point \(\pV\) is in
\(\ConvexHull(\mM)\) based on the concept of a supporting hyperplane.
Here, we take a different approach, resulting in a new linear program
that is known as the dual of the original.

By definition of convexity, any point in \(\overline{\ConvexHull(\mM)}\)
is a convex combination of the rows of \(\mM\). That is,
\(\pV\in\overline{\ConvexHull(\mM)}\) if and only if there exists a
nonnegative vector \(\yV\in\Reals^n\) with \(\1V\T\yV\le 1\) and
\(\pV = \mM\T\yV\). We only require \(\1V\T\yV\le1\), not
\(\1V\T\yV=1\), because the point \(\0V\) is in
\(\overline{\ConvexHull(\mM)}\), so a convex combination of points of
\(\mM\) may place positive weight on \(\0V\). Rewriting
\(\1V\T\yV\le 1\) as \(-\1V\T\yV\ge -1\), we may therefore formulate a
linear program to test \(\pV\in\overline{\ConvexHull(\mM)}\) as follows:
\begin{equation}\label{LPdual}
\begin{aligned}
&\text{maximize: } & & -\1V\T\yV & \\
& \text{subject to: }& \quad &  \begin{aligned}[t] \mM\T\yV & = \pV && \\
                  \yV & \ge \0V. &&
                  \end{aligned}
                  \end{aligned}
\end{equation} If the maximum objective function value is found to be
strictly less than \(-1\), there is no convex combination of the rows of
\(\mM\) that yields \(\pV\) and we conclude that
\(\pV\not\in\overline{\ConvexHull(\mM)}\).

According to standard linear program theory \citep[Section
5.8]{vanderbei1997}, \lpref{LPretransformed} is precisely what is known
as the \emph{dual} of \lpref{LPdual}, and vice versa. We may develop
this theory as follows: If \(\zV\in\Reals^d\) satisfies the \(n\)
constraints in \lpref{LPretransformed}, we may multiply each inequality
by a nonnegative real number and add all of the inequalities to obtain a
valid inequality. Therefore, if \(\yV\in\Reals^d\) is a nonnegative
vector, we conclude that \(\yV\T\mM\zV \ge -\yV\T\1V\). If the
constraints in \lpref{LPdual} are completely satisfied, we may replace
\(\yV\T\mM\) by \(\pV\T\), then minimize with respect to \(\zV\) and
maximize with respect to \(\yV\) to obtain
\begin{equation}\label{WeakDuality}
\min_{\zV}\pV\T \zV \ge \max_{\yV}-\1V\T\yV.
\end{equation} Inequality \eqref{WeakDuality} is sometimes called the
\emph{weak duality theorem}. The \emph{strong duality theorem}, which we
do not prove here, says that equality holds in this case \citep[Section
5.4]{vanderbei1997}. In our convex hull problem, strong duality means
among other things that \(\pV\) is on the boundary of
\(\ConvexHull(\mM)\) if and only if the maximum objective function value
in \lpref{LPdual} equals \(-1\), since we proved this fact for the dual
linear program in Section \ref{sec:reformulation}.

\hypertarget{applications-and-benchmarks}{\section{Applications and
Benchmarks}\label{applications-and-benchmarks}}

To illustrate the results of \secref{sec:reformulation}, we first
consider problems in 2 dimensions since they are easy to visualize. Such
low-dimensional examples are also helpful since they sometimes lend
themselves to closed-form solutions of linear programs.

For the following benchmarks and examples, the following relevant
\texttt{R} \citep{R} package versions were used: \texttt{ergm} 4.1.6804,
\texttt{Rglpk} 0.6.4, and \texttt{lpSolveAPI} 5.5.2.0.17.7.

\hypertarget{a-three-point-example}{\subsection{A Three-point Example}\label{a-three-point-example}}

Let us consider a particularly simple example in which the target set
consists of the 2-vectors \((-1,0)\T\), \((a,1)\T\), and \((b,-1)\T\)
for \(a>0\) and \(b>0\). This choice guarantees that the convex hull is
a triangle containing the origin, so we shall consider the origin to be
the centroid in this example, regardless of the actual value of the mean
of the three points.

When the test point \(\pV\) equals \((1,0)\T\), we are in a situation
that could arise after transforming by \(R\). This leads to
\lpref{LPalternative}, which may be solved in closed form because in
this case the constraints become \[
\max\left\{-\frac1a - \frac{z_2}{a}, -\frac1b + \frac{z_2}{b}\right\} \le z_1 \le 1. 
\] Since the lines \(x_2=-1/a - x_1/a\) and \(x_2=-1/b + x_1/b\) have
one positive slope and one negative slope, we minimize the maximum at
the point of intersection, i.e., when \(-1/a - z_2/a = -1/b + z_2/b\),
which implies \(z_2^* = (a-b)/(a+b)\), which in turn implies
\(z_1^* = -2/(a+b)\). By simple examination, we know that \((1,0)\T\) is
interior to the convex hull exactly when the line through \((a,1)\T\)
and \((b,-1)\T\) intersects the \(x_1\)-axis at a value larger than 1.
Thus, \(\pV\in\ConvexHull(\mM)\) if and only if \((a+b)/2 > 1\), which
is equivalent to \(z_1^*= -2/(a+b) > -1\), thus verifying the result of
Corollary \ref{cor1} for this simple example.

\begin{figure}

{\centering \subfloat[\lpref{LPalternative} with test point on the $x_1$-axis: a simple closed-form solution exists.\label{fig:2dExample-1}]{\includegraphics[width=.45\textwidth]{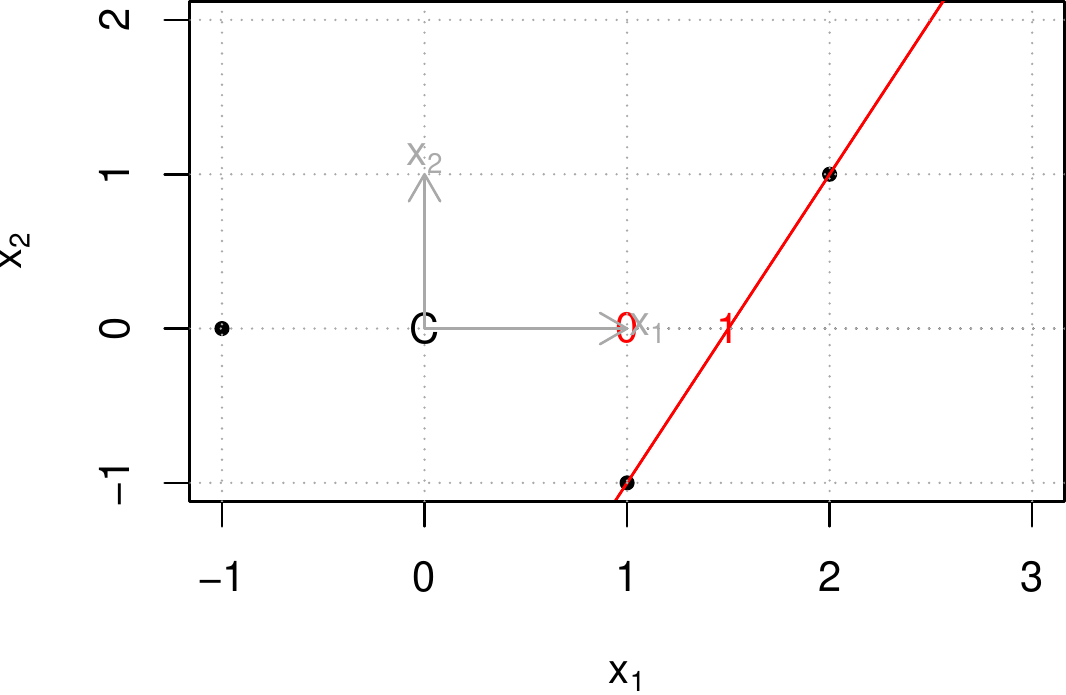} }\hspace{0.05\textwidth}\subfloat[\lpref{LPoriginal} on original scale with box constraints: a suboptimal first solution results.\label{fig:2dExample-2}]{\includegraphics[width=.45\textwidth]{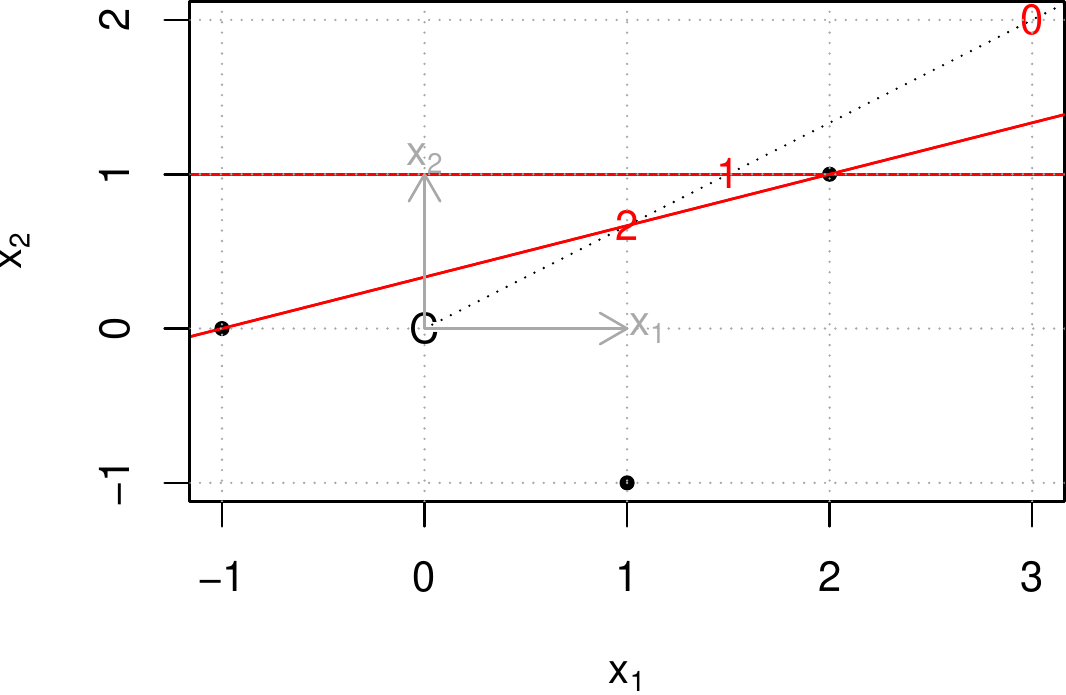} }\hspace{0.05\textwidth}\subfloat[\lpref{LPalternative} on transformed scale and without box constraints: intersection point is found immediately.\label{fig:2dExample-3}]{\includegraphics[width=.45\textwidth]{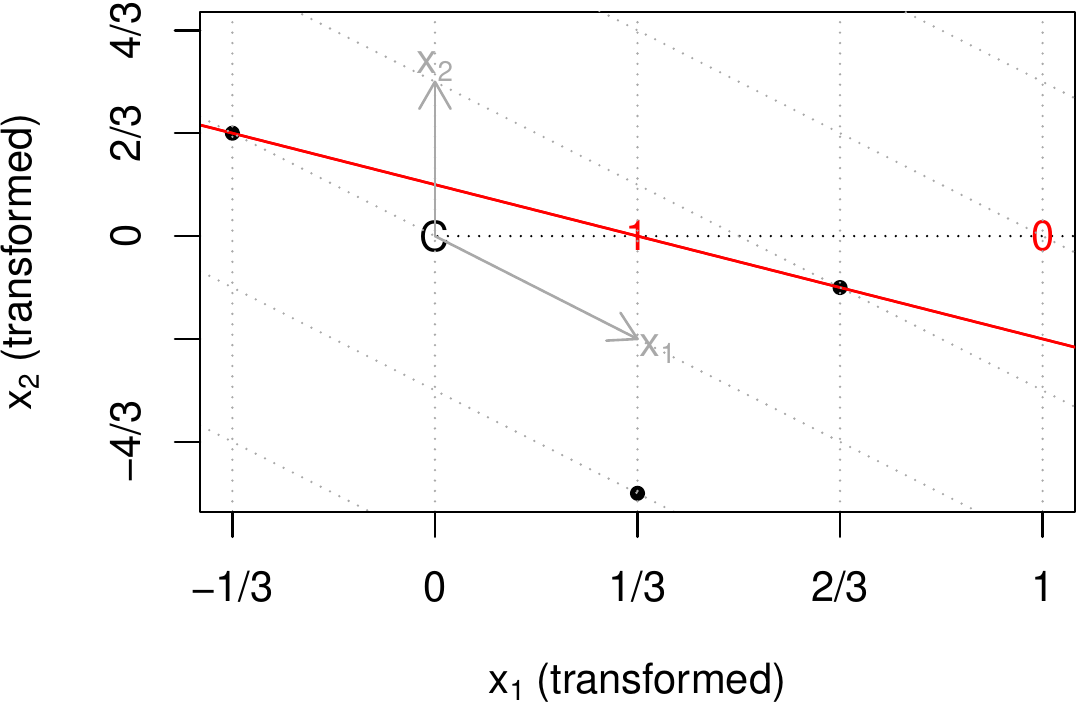} }

}

\caption{A toy example with three points. The centroid is marked with \textsf{C}; arrows indicate standard basis vectors in each dimension on the original scale, target set points are dots, supporting lines are colored red, the initial test point is \textsf{0}, and its successive iterations are labeled with integers.}\label{fig:2dExample}
\end{figure}

To make the example more concrete, let us take \(a=2\) and \(b=1\).
Then, we see in \figref{fig:2dExample-1} that \(\pV=\eV _ 1\) (labeled
with \textsf{0}) is in the convex hull, and, furthermore, that the
\(x _ 1\) coordinate of the point of intersection is \((2+1)/2=3/2\),
and the separating line's location does not depend where, along the ray
from the origin, \(\pV\) is. Now, instead, let \(\pV\) equal
\((3,2)\T\). This is the situation depicted in \figref{fig:2dExample-2}.
The original \lpref{LPoriginal}, which includes box constraints, finds
one separating line---in two dimensions, a hyperplane is a line---but
the intersection of this line and the segment connecting \(\0V\) to
\(\pV\), labeled with the digit \textsf{1}, is not optimal. A second
application of \lpref{LPoriginal} finds the optimal point labeled
\textsf{2}. In this case, point \textsf{2} lies on the line
\(\{\xV: 1+x_1+3x_2=0\}\), and it is instructive that the box
constraints alone prevent \lpref{LPoriginal} from finding this solution
when \(\pV=(3,2)\T\) but not when \(\pV=(3/2, 1)\T\), where the latter
point is the intermediate point labeled \textsf{1} in
\figref{fig:2dExample-2}.

Lastly, we illustrate the linear transformation of the previous problem
by \(\rM\). \figref{fig:2dExample-3} shows the problem transformed using
\eqref{R}, with the dotted lines and arrows showing the original
coordinate system, then \lpref{LPalternative} is applied to find the
intersection point in one iteration.

\hypertarget{sec:benchmark}{\subsection{\texorpdfstring{Benchmarking \texttt{Rglpk} against
\texttt{lpSolveAPI}}{Benchmarking Rglpk against lpSolveAPI}}\label{sec:benchmark}}

Here, we define two simple functions that each use
\lpref{LPretransformed} to search for the point on the ray from \(\0V\)
through \(\pV\) that intersects the boundary of
\(\overline{\ConvexHull(\mM)}\). Each of these functions exploits an
existing R package that wraps open-source code solving linear programs:
respectively, \texttt{Rglpk} \citep{Rglpk} wraps the GNU Linear
Programming Kit (GLPK) \citep{GLPK} and \texttt{lpSolveAPI}
\citep{lpSolveAPI} wraps the \texttt{lp\_solve} library \citep{lpsolve}.
Here is a function that uses \texttt{lpSolveAPI}:

\begin{Shaded}
\begin{Highlighting}[]
\KeywordTok{library}\NormalTok{(lpSolveAPI)}
\NormalTok{LPmod1 \textless{}{-}}\StringTok{ }\ControlFlowTok{function}\NormalTok{(M, p) \{ }\CommentTok{\# it is assumed the centroid is the zero vector}
\NormalTok{  lp \textless{}{-}}\StringTok{ }\KeywordTok{make.lp}\NormalTok{(n \textless{}{-}}\StringTok{ }\KeywordTok{nrow}\NormalTok{(M), d \textless{}{-}}\StringTok{ }\KeywordTok{ncol}\NormalTok{(M))}
  \ControlFlowTok{for}\NormalTok{ (k }\ControlFlowTok{in} \KeywordTok{seq\_len}\NormalTok{(d)) }\KeywordTok{set.column}\NormalTok{(lp, k, M[, k])}
  \KeywordTok{set.constr.type}\NormalTok{(lp, }\KeywordTok{rep}\NormalTok{(}\StringTok{"\textgreater{}="}\NormalTok{, n))}
  \KeywordTok{set.rhs}\NormalTok{(lp, }\KeywordTok{rep}\NormalTok{(}\OperatorTok{{-}}\FloatTok{1.0}\NormalTok{, n))}
  \KeywordTok{set.bounds}\NormalTok{(lp, }\DataTypeTok{lower =} \KeywordTok{rep}\NormalTok{(}\OperatorTok{{-}}\OtherTok{Inf}\NormalTok{, d), }\DataTypeTok{upper =} \KeywordTok{rep}\NormalTok{(}\OtherTok{Inf}\NormalTok{, d)) }\CommentTok{\# z unbounded}
  \KeywordTok{set.objfn}\NormalTok{(lp, p) }\CommentTok{\# objective function is p \%*\% z}
  \KeywordTok{solve}\NormalTok{(lp)}
  \KeywordTok{return}\NormalTok{(}\KeywordTok{get.objective}\NormalTok{(lp))}
\NormalTok{\}}
\end{Highlighting}
\end{Shaded}

Similarly, this code uses \texttt{Rglpk}:

\begin{Shaded}
\begin{Highlighting}[]
\KeywordTok{library}\NormalTok{(Rglpk)}
\NormalTok{LPmod2 \textless{}{-}}\StringTok{ }\ControlFlowTok{function}\NormalTok{(M, p) \{ }\CommentTok{\# it is assumed the centroid is the zero vector}
\NormalTok{  n \textless{}{-}}\StringTok{ }\KeywordTok{nrow}\NormalTok{(M)}
\NormalTok{  d \textless{}{-}}\StringTok{ }\KeywordTok{ncol}\NormalTok{(M)}
\NormalTok{  bounds \textless{}{-}}\StringTok{ }\KeywordTok{list}\NormalTok{(}
    \DataTypeTok{lower =} \KeywordTok{list}\NormalTok{(}\DataTypeTok{ind =} \KeywordTok{seq}\NormalTok{(d), }\DataTypeTok{val =} \KeywordTok{rep}\NormalTok{(}\OperatorTok{{-}}\OtherTok{Inf}\NormalTok{, d)),}
    \DataTypeTok{upper =} \KeywordTok{list}\NormalTok{(}\DataTypeTok{ind =} \KeywordTok{seq}\NormalTok{(d), }\DataTypeTok{val =} \KeywordTok{rep}\NormalTok{(}\OtherTok{Inf}\NormalTok{, d))}
\NormalTok{  )}
\NormalTok{  ans \textless{}{-}}\StringTok{ }\KeywordTok{Rglpk\_solve\_LP}\NormalTok{(}
    \DataTypeTok{obj =}\NormalTok{ p, }\DataTypeTok{mat =}\NormalTok{ M, }\DataTypeTok{dir =} \KeywordTok{rep}\NormalTok{(}\StringTok{"\textgreater{}="}\NormalTok{, n),}
    \DataTypeTok{rhs =} \KeywordTok{rep}\NormalTok{(}\OperatorTok{{-}}\FloatTok{1.0}\NormalTok{, n), }\DataTypeTok{bounds =}\NormalTok{ bounds}
\NormalTok{  )}
  \KeywordTok{return}\NormalTok{(ans}\OperatorTok{$}\NormalTok{optimum)}
\NormalTok{\}}
\end{Highlighting}
\end{Shaded}

A test case simulates \(n=100{,}000\) points uniformly distributed in
the \(d=20\)-dimensional unit cube, then verifies that the point
\(\pV=(1, \dotsc, 1)\T\) is not inside the convex hull by finding the
smallest \(\gamma>0\) such that \(\gamma\pV\) is on the boundary:

\begin{Shaded}
\begin{Highlighting}[]
\KeywordTok{set.seed}\NormalTok{(}\DecValTok{123}\NormalTok{)}
\NormalTok{M \textless{}{-}}\StringTok{ }\KeywordTok{matrix}\NormalTok{(}\KeywordTok{runif}\NormalTok{(}\FloatTok{1e5} \OperatorTok{*}\StringTok{ }\DecValTok{20}\NormalTok{), }\DataTypeTok{ncol =} \DecValTok{20}\NormalTok{)}
\NormalTok{Mbar \textless{}{-}}\StringTok{ }\KeywordTok{colMeans}\NormalTok{(M)}
\NormalTok{M \textless{}{-}}\StringTok{ }\KeywordTok{sweep}\NormalTok{(M, }\DecValTok{2}\NormalTok{, Mbar) }\CommentTok{\# Translate M so sample mean is at origin}
\NormalTok{p \textless{}{-}}\StringTok{ }\KeywordTok{rep.int}\NormalTok{(}\DecValTok{1}\NormalTok{, }\DecValTok{20}\NormalTok{) }\OperatorTok{{-}}\StringTok{ }\NormalTok{Mbar}
\NormalTok{lpstime \textless{}{-}}\StringTok{ }\KeywordTok{system.time}\NormalTok{(lpsstep \textless{}{-}}\StringTok{ }\DecValTok{{-}1} \OperatorTok{/}\StringTok{ }\KeywordTok{LPmod1}\NormalTok{(M, p))[}\DecValTok{1}\NormalTok{]}
\NormalTok{glpktime \textless{}{-}}\StringTok{ }\KeywordTok{system.time}\NormalTok{(glpkstep \textless{}{-}}\StringTok{ }\DecValTok{{-}1} \OperatorTok{/}\StringTok{ }\KeywordTok{LPmod2}\NormalTok{(M, p))[}\DecValTok{1}\NormalTok{]}
\KeywordTok{stopifnot}\NormalTok{(}\KeywordTok{isTRUE}\NormalTok{(}\KeywordTok{all.equal}\NormalTok{(lpsstep, glpkstep)))}
\end{Highlighting}
\end{Shaded}

Both implementations produce the same result: \(\gamma\approx0.4801\),
but \texttt{Rglpk} takes 10 seconds, whereas \texttt{lpSolveAPI} takes
19. However, \texttt{Rglpk} requires GLPK to be installed separately on
some platforms; thus, our suggestion for package developers is to check
for availability of GLPK on the system and use \texttt{Rglpk} if it is
installed.

\hypertarget{testing-multiple-points}{\section{\texorpdfstring{Testing multiple points
\label{sec:multiple}}{Testing multiple points }}\label{testing-multiple-points}}

As explained at the beginning of \secref{sec:LinearProgram}, the
approximated difference of log-likelihoods in \eqref{eq:LoglikApprox2}
requires that every \(\gV(\zM_i)\in\sS\) be contained in
\({\ConvexHull(\tS)}\) in order for the approximation to have a
maximizer. We might therefore consider a strategy of checking, prior to
using \eqref{eq:LoglikApprox2}, whether
\(\gV(\zM_i)\in{\ConvexHull(\tS)}\) for all \(i=1,\dotsc,s\) using the
single-test-point methods discussed earlier. This strategy has two
potential drawbacks: First, the computational burden might be quite high
if any of the dimension \(d\), the number of test points \(s\), or the
number of target points \(r\) is large. Second, we must decide what to
do in the case where one or more of the points in \(\sS\) is found to be
outside \({\ConvexHull(\tS)}\). This section addresses each of these
questions and then presents an illustrative example using a network
dataset with missing edges.

\hypertarget{reducing-computational-burden}{\subsection{\texorpdfstring{Reducing computational burden
\label{sec:reducing}}{Reducing computational burden }}\label{reducing-computational-burden}}

There are evidently many possible ways to approach the question of how
to efficiently decide which test points, if any, lie outside
\(\ConvexHull(\tS)\). Even if we only consider ideas for reducing the
size of the set \(\tS\) in such a way as to have little or no influence
on the answer, we might attempt to search for and eliminate target set
points either that lie entirely inside \(\ConvexHull(\tS)\), since
eliminating such points from \(\tS\) does not change
\(\ConvexHull(\tS)\) at all, or that lie close to other points in
\(\tS\), since eliminating such points does not change
\(\ConvexHull(\tS)\) very much. Here, we merely suggest a simplistic
version of the first approach, as a thorough exploration of methods for
reducing the size of \(\tS\) without altering \(\ConvexHull(\tS)\) is
well beyond the scope of this article.

Whether a point in \(\tS\) is interior to \(\ConvexHull(\tS)\) or
whether it lies on the boundary of \(\ConvexHull(\tS)\) has to do with
that point's so-called \emph{data depth}, a concept often attributed (in
the two-dimensional case) to \citet{tukey1975}. Because the deepest
points are the ones that can be eliminated without changing
\(\ConvexHull(\tS)\), the sizable literature on data depth \citep[see,
e.g., the discussion by][]{zuo2000} may be relevant to our problem. For
the \(d=2\)-dimensional case, several authors have developed efficient
methods for identifying exactly the points lying on the convex hull
boundary; this is a particular case of the so-called \emph{convex
layers} problem \citep{chazelle1985}.

Among the most simplistic ideas is to use Mahalanobis distance from the
centroid, defined for any point \(\xV\in\tS\) as \[
d(\xV) = \sqrt{ \xV\T \hat\Sigma\inv \xV},
\] where \(\hat\Sigma\) is the sample covariance matrix, as a measure of
data depth. For the \(|\tS|=100{,}000\)-point example of
\secref{sec:benchmark}, we also sample \(|\sS|=5\) corners of the unit
cube in \(\Reals^{20}\) and try eliminating the fraction \(f=i/10\) for
\(i=1, \dotsc, 10\) of the deepest points as measured by Mahalanobis
distance:

\begin{Shaded}
\begin{Highlighting}[]
\NormalTok{p \textless{}{-}}\StringTok{ }\KeywordTok{sweep}\NormalTok{(}\KeywordTok{matrix}\NormalTok{(}\KeywordTok{rbinom}\NormalTok{(}\DecValTok{5} \OperatorTok{*}\StringTok{ }\DecValTok{20}\NormalTok{, }\DecValTok{1}\NormalTok{, }\FloatTok{0.5}\NormalTok{), }\DataTypeTok{ncol =} \DecValTok{20}\NormalTok{), }\DecValTok{2}\NormalTok{, Mbar)}
\NormalTok{d \textless{}{-}}\StringTok{ }\KeywordTok{rowSums}\NormalTok{((M }\OperatorTok{\%*\%}\StringTok{ }\KeywordTok{solve}\NormalTok{(}\KeywordTok{cov}\NormalTok{(M))) }\OperatorTok{*}\StringTok{ }\NormalTok{M) }\CommentTok{\# Find squared Mahalanobis distances}
\NormalTok{M \textless{}{-}}\StringTok{ }\NormalTok{M[}\KeywordTok{order}\NormalTok{(d, }\DataTypeTok{decreasing =} \OtherTok{TRUE}\NormalTok{), ]}
\NormalTok{b \textless{}{-}}\StringTok{ }\KeywordTok{rep}\NormalTok{(}\OtherTok{Inf}\NormalTok{, }\DecValTok{10}\NormalTok{) }\CommentTok{\# Vector for storing LP minima}
\ControlFlowTok{for}\NormalTok{ (i }\ControlFlowTok{in} \DecValTok{10}\OperatorTok{:}\DecValTok{1}\NormalTok{) \{}
  \ControlFlowTok{for}\NormalTok{ (j }\ControlFlowTok{in} \DecValTok{1}\OperatorTok{:}\KeywordTok{nrow}\NormalTok{(p)) \{}
\NormalTok{    b[i] \textless{}{-}}\StringTok{ }\KeywordTok{min}\NormalTok{(b[i], }\KeywordTok{LPmod2}\NormalTok{(M[}\DecValTok{1}\OperatorTok{:}\NormalTok{(i }\OperatorTok{*}\StringTok{ }\KeywordTok{nrow}\NormalTok{(M) }\OperatorTok{/}\StringTok{ }\DecValTok{10}\NormalTok{), ], p[j, ]))}
\NormalTok{  \}}
\NormalTok{\}}
\end{Highlighting}
\end{Shaded}

\begin{figure}
\includegraphics[width=1\textwidth]{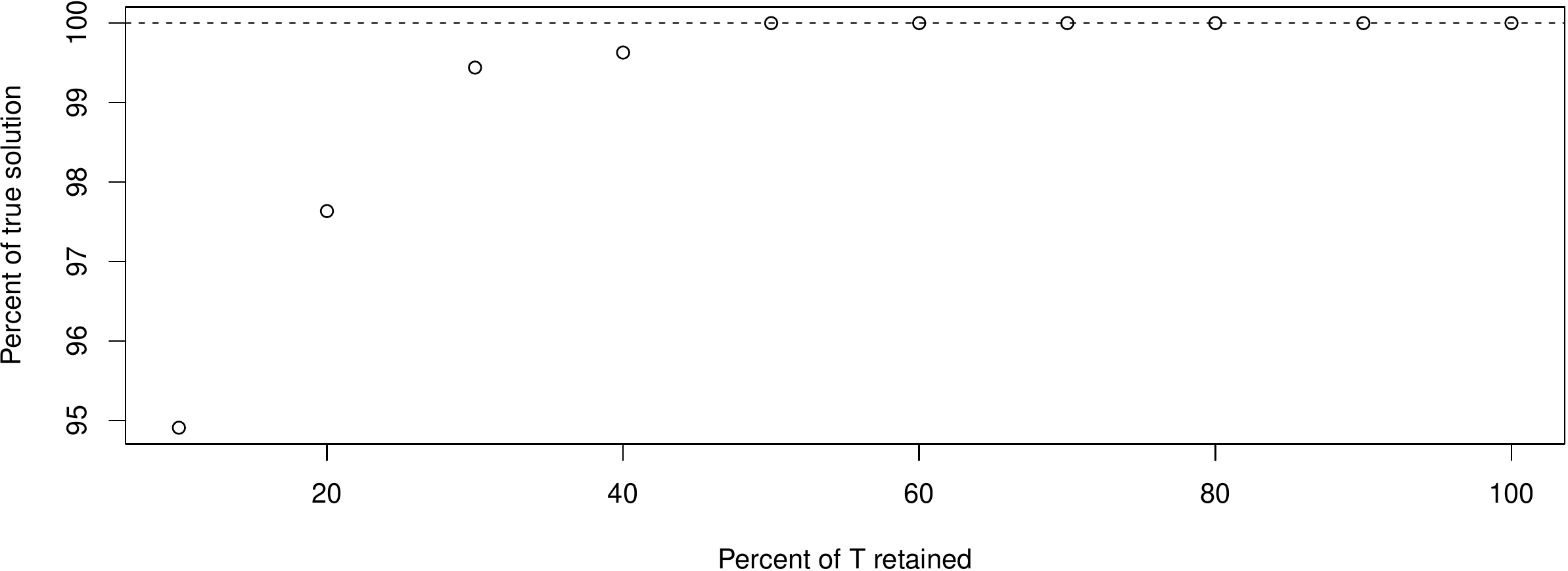} \caption{In this example where $d=20$ and $n=100{,}000$, half of the target set can be eliminated using a simplistic Mahalonobis distance-based algorithm without appreciably degrading the quality of the convex hull scaling factor.}\label{fig:plotTest}
\end{figure}

In this 20-dimensional problem, we see that we can effectively disregard
half the points in \(\tS\), using a simple Mahalanobis ordering, without
degrading the solution appreciably. Since computing effort scales
linearly with the number of points in \(\tS\), this seems like a useful
tool. We nonetheless recommend caution, as our experiments indicate that
the fraction of points that may be discarded by their Mahalanobis
distances varies considerably for different choices of \(d\). For
example, we find for \(d=50\) that each of the \(100{,}000\) points
sampled uniformly randomly in the unit cube lies on the boundary of the
convex hull.

\hypertarget{rescaling-observed-statistics}{\subsection{\texorpdfstring{Rescaling observed statistics
\label{sec:rescaling}}{Rescaling observed statistics }}\label{rescaling-observed-statistics}}

Recalling the case of a completely observed network, where \(\sS\)
consists of a single point \(\gV(\yM_\obs)\), the strategy used by
\citet{hummel2012} is to find a point, say \(\hat\xiV\), between \(\0V\)
and \(\gV(\yM_\obs)\) that lies inside \(\ConvexHull(\tS)\) yet close to
the boundary. By treating \(\hat\xiV\) as though it is \(\gV(\yM_\obs)\)
in constructing approximation \eqref{eq:LoglikApprox}, the approximation
will have a well-defined maximizer, and this maximizer is then used to
generate a new target set in the next iteration of the algorithm.

In analogous fashion, when \(\sS\) consists of multiple points and some
of them lie outside \(\ConvexHull(\tS)\), we propose to shrink each of
these points toward \(\0V\) using the same scaling factor, say
\(\gamma\), where \(\gamma\in(0,1]\) is chosen so that every element of
\(\sS\) lies within \(\ConvexHull(\tS)\) after the scaling is applied.
Since \lpref{LPretransformed} yields the optimal step length for each
point \(\pV\), we obtain this step length by iterating through the
points in \(\sS\) and selecting the least of the resulting step lengths.
This transformation also has the effect of shifting the sample mean of
the points in \(\sS\) to a point somewhere between \(\0V\) and the
sample mean of the untransformed points in \(\sS\).

\hypertarget{example}{\subsection{\texorpdfstring{Example
\label{sec:example}}{Example }}\label{example}}

This section illustrates the idea of \secref{sec:rescaling} using a
network representing a school friendship network based on a school
community in the rural western United States. The synthetic
\texttt{faux.mesa.high} network in the \texttt{ergm} package includes
205 students in grades 7 through 12. We may create a copy of this
network and then randomly select 10\% of the edge observations (whether
the edge is present or absent) to be categorized as missing:

\begin{Shaded}
\begin{Highlighting}[]
\KeywordTok{library}\NormalTok{(ergm)}
\KeywordTok{data}\NormalTok{(faux.mesa.high)}
\NormalTok{fmh \textless{}{-}}\StringTok{ }\NormalTok{faux.mesa.high}
\NormalTok{m \textless{}{-}}\StringTok{ }\KeywordTok{as.matrix}\NormalTok{(fmh)}
\NormalTok{el \textless{}{-}}\StringTok{ }\KeywordTok{cbind}\NormalTok{(}\KeywordTok{row}\NormalTok{(m)[}\KeywordTok{lower.tri}\NormalTok{(m)], }\KeywordTok{col}\NormalTok{(m)[}\KeywordTok{lower.tri}\NormalTok{(m)])}
\KeywordTok{set.seed}\NormalTok{(}\DecValTok{123}\NormalTok{)}
\NormalTok{s \textless{}{-}}\StringTok{ }\KeywordTok{sample}\NormalTok{(}\KeywordTok{nrow}\NormalTok{(el), }\KeywordTok{round}\NormalTok{(}\KeywordTok{nrow}\NormalTok{(el) }\OperatorTok{*}\StringTok{ }\FloatTok{0.1}\NormalTok{))}
\NormalTok{fmh[el[s, ]] \textless{}{-}}\StringTok{ }\OtherTok{NA}
\end{Highlighting}
\end{Shaded}

If we now define an ERGM using a few statistics related to those
originally used to create the \texttt{faux.mesa.high} network---details
are found via \texttt{help(faux.mesa.high)}---we begin with an estimator
that can be derived using straightforward logistic regression. We denote
this maximum pseudo-likelihood estimator or MPLE, details of which may
be found in Section 5.2 of \citet{hunter2008}, by \texttt{theta0} or
\(\thetaV_0\).

\begin{Shaded}
\begin{Highlighting}[]
\NormalTok{fmhFormula \textless{}{-}}\StringTok{ }\NormalTok{fmh }\OperatorTok{\textasciitilde{}}\StringTok{ }\NormalTok{edges }\OperatorTok{+}\StringTok{ }\KeywordTok{nodematch}\NormalTok{(}\StringTok{"Grade"}\NormalTok{) }\OperatorTok{+}
\StringTok{  }\KeywordTok{gwesp}\NormalTok{(}\KeywordTok{log}\NormalTok{(}\DecValTok{3} \OperatorTok{/}\StringTok{ }\DecValTok{2}\NormalTok{), }\DataTypeTok{fixed =} \OtherTok{TRUE}\NormalTok{)}
\NormalTok{theta0 \textless{}{-}}\StringTok{ }\KeywordTok{coef}\NormalTok{(}\KeywordTok{ergm}\NormalTok{(fmhFormula, }\DataTypeTok{estimate =} \StringTok{"MPLE"}\NormalTok{))}
\end{Highlighting}
\end{Shaded}

To construct approximation \eqref{eq:LoglikApprox2}, we need two samples
of random networks, \(\yM_{1},\dotsc,\yM_{r}\) and
\(\zM_{1},\dotsc,\zM_{s}\) from \(\SampleSpace\) and
\(\SampleSpace(\yM_{\obs})\), respectively. For the latter, we employ
the \texttt{constraints} capability of the \texttt{ergm} package.

\begin{Shaded}
\begin{Highlighting}[]
\NormalTok{gY \textless{}{-}}\StringTok{ }\KeywordTok{simulate}\NormalTok{(fmhFormula,}
  \DataTypeTok{coef =}\NormalTok{ theta0, }\DataTypeTok{nsim =} \DecValTok{500}\NormalTok{, }\DataTypeTok{output =} \StringTok{"stats"}\NormalTok{,}
  \DataTypeTok{control =} \KeywordTok{snctrl}\NormalTok{(}\DataTypeTok{MCMC.interval =} \FloatTok{1e4}\NormalTok{)}
\NormalTok{)}
\NormalTok{gZ \textless{}{-}}\StringTok{ }\KeywordTok{simulate}\NormalTok{(fmhFormula,}
  \DataTypeTok{coef =}\NormalTok{ theta0, }\DataTypeTok{nsim =} \DecValTok{100}\NormalTok{, }\DataTypeTok{output =} \StringTok{"stats"}\NormalTok{,}
  \DataTypeTok{constraints =} \OperatorTok{\textasciitilde{}}\NormalTok{observed,}
  \DataTypeTok{control =} \KeywordTok{snctrl}\NormalTok{(}\DataTypeTok{MCMC.interval =} \FloatTok{1e4}\NormalTok{)}
\NormalTok{)}
\end{Highlighting}
\end{Shaded}

We can use the code developed earlier in this article to show that the
\(\gV(\zM_j)\) statistics are not interior to the convex hull of the
\(\gV(\yM_i)\) statistics:

\begin{Shaded}
\begin{Highlighting}[]
\NormalTok{centroid \textless{}{-}}\StringTok{ }\KeywordTok{colMeans}\NormalTok{(gY)}
\NormalTok{gY \textless{}{-}}\StringTok{ }\KeywordTok{sweep}\NormalTok{(gY, }\DecValTok{2}\NormalTok{, centroid) }\CommentTok{\# Translate the g(Y) statistics}
\NormalTok{gZ \textless{}{-}}\StringTok{ }\KeywordTok{sweep}\NormalTok{(gZ, }\DecValTok{2}\NormalTok{, centroid) }\CommentTok{\# Translate the g(Y) statistics}
\NormalTok{scale \textless{}{-}}\StringTok{ }\OtherTok{Inf}
\ControlFlowTok{for}\NormalTok{ (j }\ControlFlowTok{in} \DecValTok{1}\OperatorTok{:}\KeywordTok{nrow}\NormalTok{(gZ)) \{}
\NormalTok{  scale \textless{}{-}}\StringTok{ }\KeywordTok{min}\NormalTok{(scale, }\DecValTok{{-}1} \OperatorTok{/}\StringTok{ }\KeywordTok{LPmod2}\NormalTok{(gY, gZ[j, ]))}
\NormalTok{\}}
\end{Highlighting}
\end{Shaded}

The code above finds that 0.733 is the largest scaling factor that, when
multiplied by each \(\gV(\zM_j)\) vector, ensures that the result is on
or inside the boundary of the convex hull of the target points
\(\gV(\yM_1),\dotsc, \gV(\yM_{100})\). Since this scaling factor is less
than 1, at least one element in the test set lies outside
\(\ConvexHull(\tM)\) and so the approximation in
\eqref{eq:LoglikApprox2} has no maximizer. \figref{fig:PairsPlot}
depicts the target points, the test points, and the test points after
scaling by 0.733.

\begin{figure}
\includegraphics[width=1\textwidth]{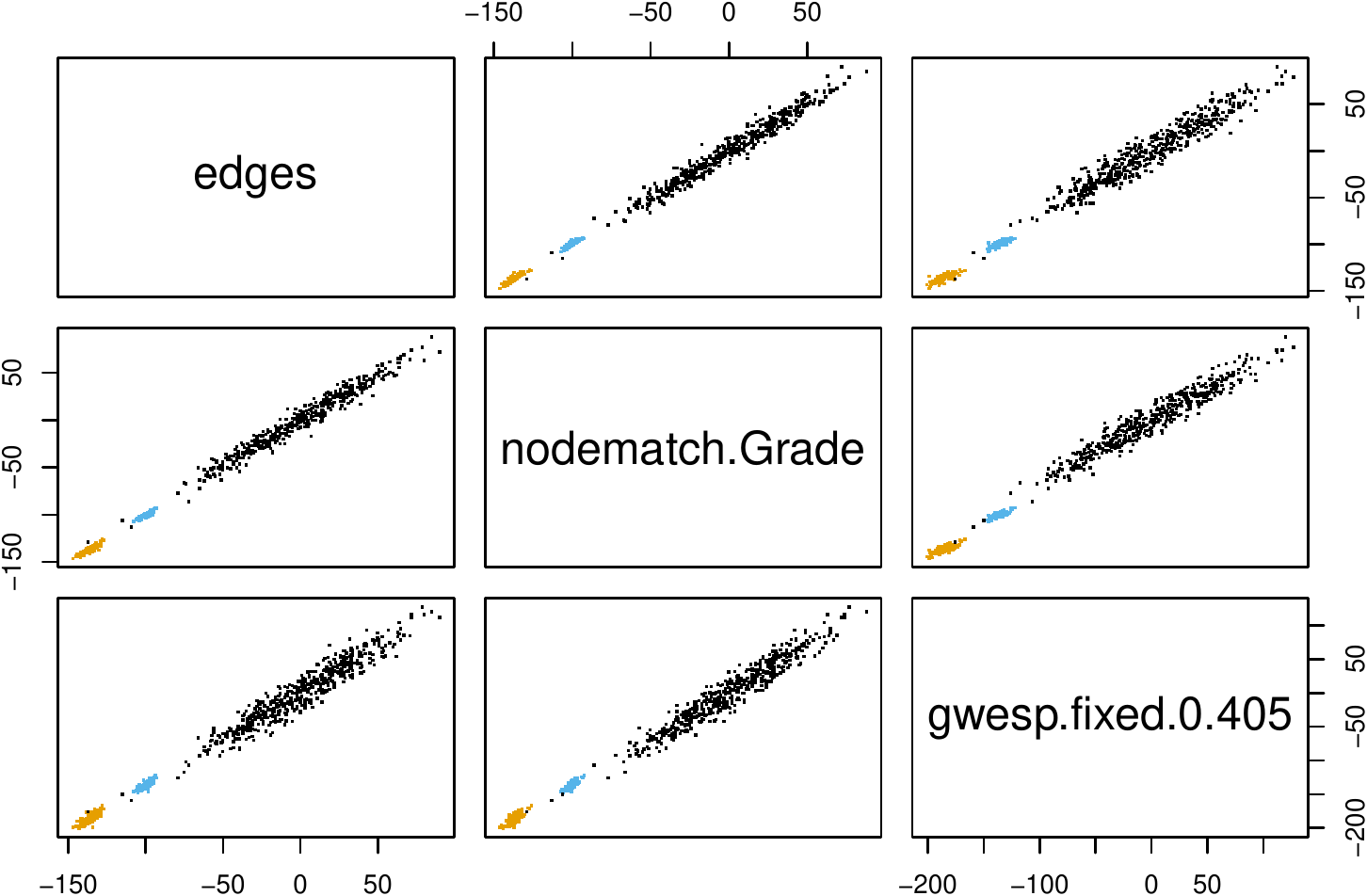} \caption{Pairwise scatterplots of three-dimensional network statistics generated using the MPLE, with statistics on the full sample space of networks as black dots and those on the sample space constrained to coincide with the observed network as orange dots. The constrained points are rescaled as blue points so that none lies outside the convex hull of the black points. In each plot, there are 500 black, 100 orange, and 100 blue points.}\label{fig:PairsPlot}
\end{figure}

With our samples in place, we may now construct approximation
\eqref{eq:LoglikApprox2}. Here, we negate the function so that our
objective is to minimize it:

\begin{Shaded}
\begin{Highlighting}[]
\NormalTok{l \textless{}{-}}\StringTok{ }\ControlFlowTok{function}\NormalTok{(ThetaMinusTheta0, gY, gZ, scale) \{}
  \OperatorTok{{-}}\KeywordTok{log}\NormalTok{(}\KeywordTok{mean}\NormalTok{(}\KeywordTok{exp}\NormalTok{(scale }\OperatorTok{*}\StringTok{ }\NormalTok{gZ }\OperatorTok{\%*\%}\StringTok{ }\NormalTok{ThetaMinusTheta0))) }\OperatorTok{+}
\StringTok{    }\KeywordTok{log}\NormalTok{(}\KeywordTok{mean}\NormalTok{(}\KeywordTok{exp}\NormalTok{(gY }\OperatorTok{\%*\%}\StringTok{ }\NormalTok{ThetaMinusTheta0)))}
\NormalTok{\}}
\end{Highlighting}
\end{Shaded}

Finally, we employ the \texttt{optim} function in \texttt{R} \citep{R}
to minimize the objective function. We use a scale value of 90\%~ of the
value that places one of the test values exactly on the boundary of the
convex hull so that all of the scaled test points are interior to the
convex hull. We may repeat the whole process iteratively until the
entire set of test points is well within the convex hull boundary:

\begin{Shaded}
\begin{Highlighting}[]
\NormalTok{multipliers \textless{}{-}}\StringTok{ }\NormalTok{scale}
\NormalTok{NewTheta \textless{}{-}}\StringTok{ }\NormalTok{theta0}
\ControlFlowTok{while}\NormalTok{ (scale }\OperatorTok{\textless{}}\StringTok{ }\FloatTok{1.11}\NormalTok{) \{ }\CommentTok{\# We want 0.9 * scale \textgreater{} 1 when finished}
\NormalTok{  NewTheta \textless{}{-}}\StringTok{ }\NormalTok{NewTheta }\OperatorTok{+}\StringTok{ }\KeywordTok{optim}\NormalTok{(}\DecValTok{0} \OperatorTok{*}\StringTok{ }\NormalTok{NewTheta, l,}
    \DataTypeTok{gY =}\NormalTok{ gY, }\DataTypeTok{gZ =}\NormalTok{ gZ,}
    \DataTypeTok{scale =} \FloatTok{0.9} \OperatorTok{*}\StringTok{ }\NormalTok{scale}
\NormalTok{  )}\OperatorTok{$}\NormalTok{par}
\NormalTok{  theta0 \textless{}{-}}\StringTok{ }\KeywordTok{rbind}\NormalTok{(theta0, NewTheta)}
\NormalTok{  gY \textless{}{-}}\StringTok{ }\KeywordTok{simulate}\NormalTok{(fmhFormula,}
    \DataTypeTok{coef =}\NormalTok{ NewTheta, }\DataTypeTok{nsim =} \DecValTok{500}\NormalTok{, }\DataTypeTok{output =} \StringTok{"stats"}\NormalTok{,}
    \DataTypeTok{control =} \KeywordTok{snctrl}\NormalTok{(}\DataTypeTok{MCMC.interval =} \FloatTok{1e4}\NormalTok{)}
\NormalTok{  )}
\NormalTok{  gZ \textless{}{-}}\StringTok{ }\KeywordTok{simulate}\NormalTok{(fmhFormula,}
    \DataTypeTok{coef =}\NormalTok{ NewTheta, }\DataTypeTok{nsim =} \DecValTok{100}\NormalTok{, }\DataTypeTok{output =} \StringTok{"stats"}\NormalTok{,}
    \DataTypeTok{constraints =} \OperatorTok{\textasciitilde{}}\NormalTok{observed,}
    \DataTypeTok{control =} \KeywordTok{snctrl}\NormalTok{(}\DataTypeTok{MCMC.interval =} \FloatTok{1e4}\NormalTok{)}
\NormalTok{  )}
\NormalTok{  centroid \textless{}{-}}\StringTok{ }\KeywordTok{colMeans}\NormalTok{(gY)}
\NormalTok{  gY \textless{}{-}}\StringTok{ }\KeywordTok{sweep}\NormalTok{(gY, }\DecValTok{2}\NormalTok{, centroid) }\CommentTok{\# Translate the g(Y) statistics}
\NormalTok{  gZ \textless{}{-}}\StringTok{ }\KeywordTok{sweep}\NormalTok{(gZ, }\DecValTok{2}\NormalTok{, centroid) }\CommentTok{\# Translate the g(Y) statistics}
\NormalTok{  scale \textless{}{-}}\StringTok{ }\OtherTok{Inf}
  \ControlFlowTok{for}\NormalTok{ (j }\ControlFlowTok{in} \DecValTok{1}\OperatorTok{:}\KeywordTok{nrow}\NormalTok{(gZ)) \{}
\NormalTok{    scale \textless{}{-}}\StringTok{ }\KeywordTok{min}\NormalTok{(scale, }\DecValTok{{-}1} \OperatorTok{/}\StringTok{ }\KeywordTok{LPmod2}\NormalTok{(gY, gZ[j, ]))}
\NormalTok{  \}}
\NormalTok{  multipliers \textless{}{-}}\StringTok{ }\KeywordTok{c}\NormalTok{(multipliers, scale)}
\NormalTok{\}}
\end{Highlighting}
\end{Shaded}

Table \ref{tab:iterationTable} gives successive values of \(\thetaV_0\)
that are determined as maximizers of \eqref{eq:LoglikApprox2} after the
test set points \(\gV(\zM_1), \dotsc, \gV(\zM_s)\) are rescaled to lie
within the convex hull of \(\gV(\yM_1), \dotsc, \gV(\yM_r)\). The
maximum pseudo-likelihood estimate (MPLE) in the first row of the table
is obtained using logistic regression and is often used as an initial
approximation to the maximum likelihood estimator when employing MCMC
MLE \citep{hunter2008}. However, the MPLE fails to take the missing
network observations into account and, as demonstrated by
\citet{hummel2012}, it is not the case that the MPLE generates sample
network statistics near the observed statistics.

\begin{table}

\caption{\label{tab:iterationTable}Values of \texttt{theta0} and their associated test point mulipliers, starting with the maximum pseudo-likelihood estimator.}
\centering
\begin{tabular}[t]{ccccc}
\toprule
Iteration & \texttt{edges} & \texttt{nodematch.Grade} & \texttt{gwesp.fixed.0.405} & Multiplier\\
\midrule
0 & -6.302 & 2.264 & 1.249 & 0.733\\
1 & -6.236 & 2.098 & 1.276 & 1.094\\
2 & -6.223 & 1.981 & 1.302 & 2.043\\
\bottomrule
\end{tabular}
\end{table}

Figure \ref{fig:PairsPlot2} shows that the final value of \(\thetaV_0\)
produces samples such that the whole test set lies on the interior of
the target set, which allows \eqref{eq:LoglikApprox2} to be maximized to
produce an approximate MLE. As recommended by \citet{hummel2012}, we
might use moderate-sized samples to obtain a viable \(\thetaV_0\) value
using the idea here, then use much larger samples once \(\thetaV_0\) has
been found in order to improve the accuracy of Approximation
\eqref{eq:LoglikApprox2}.

\begin{figure}
\includegraphics[width=1\textwidth]{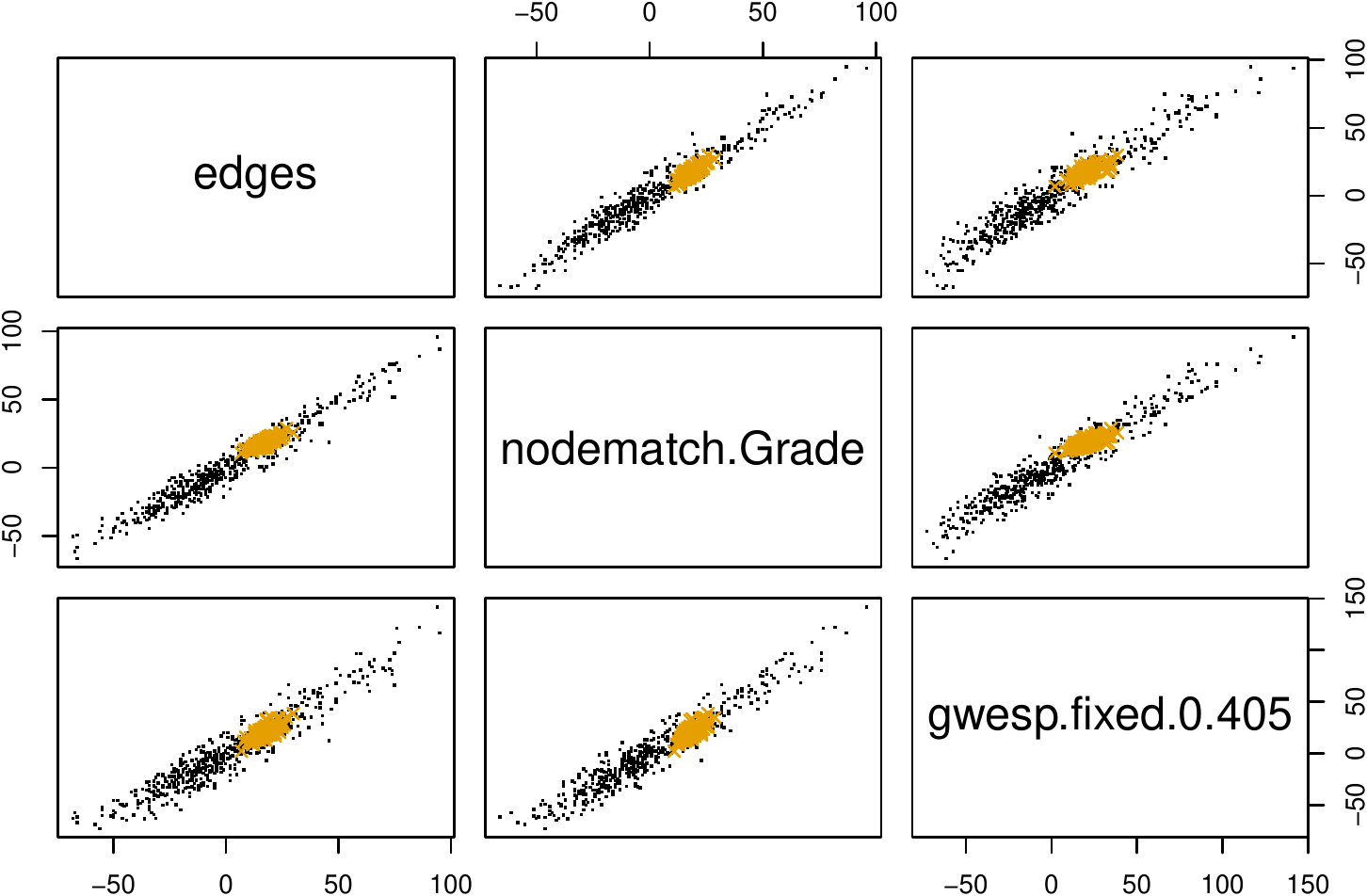} \caption{At the final iteration, the test set (orange) is entirely within the convex hull of the target set (black).}\label{fig:PairsPlot2}
\end{figure}

\hypertarget{discussion}{\section{\texorpdfstring{Discussion
\label{sec:discussion}}{Discussion }}\label{discussion}}

This article discusses the problem of determining whether a given point,
or set of points, lies within the convex hull of another set of points
in \(d\) dimensions. While this problem, along with its solution via
linear programming, is known, we are not aware of any work that
discusses it in the context of a statistical problem such as one
discussed here, namely, the maximization of an approximation to the
loglikelihood function for an intractable exponential-family model.

Here, we provide multiple improvements on the simplistic implementation
of the linear programming solution to the yes-or-no question involving a
single test point that exists in the \texttt{ergm} package
\citep{hunter2008} as a means of implementing the idea of
\citet{hummel2012}: First, we eliminate the need for the ``box
constraints'' of \texttt{ergm} and show how the dual linear program may
be derived from first principles. Second, we render the
``trial-and-error'' approach obsolete by showing how to find the exact
point of intersection between the convex hull boundary and the ray
originating at the origin and passing through the test point. Third, we
test the \texttt{lpSolveAPI} package that is currently used by
\texttt{ergm} against the \texttt{Rglpk} package, finding that the
latter appears to be far more efficient at solving the particular linear
programs we encounter. Fourth, we discuss the statistical case of
missing network observations, in which the test set may consist of
multiple points, establish an important necessary condition, and suggest
a method for handling this case.

In addition, we point out several ways in which this work might be
extended, particularly in the case of multiple test points. For one, the
question of how to streamline computations is wide open, particularly
since it is not necessary to find the exact maximum scaling factor that
maps each test point into the convex hull. For the purposes of
approximating a maximum likelihood estimator, we seek only an upper
bound on the acceptable scaling factors; indeed, in practice we want to
scale all test points so that they are inside the boundary. This means
that it might be possible to eliminate from the target set any points
that are sufficiently close to another target set point and that doing
so would not change the needed scaling factor too much.

We might also consider how to optimize the size of the sample chosen for
the test set in the first place. For instance, if the scaling factor
needed is considerably smaller than one, there might be an advantage in
sampling just a handful of points, possibly just a single point in order
to move the initial value of \(\thetaV_0\) closer to the true MLE, when
a larger sample of test points could be drawn.

One thing that is clear is that the extensions described here would be
much more difficult, if not impossible, to consider without the
improvements to the \texttt{ergm} package's convex-hull testing
procedure outlined in this article.

\hypertarget{acknowledgements}{\section*{Acknowledgements}\label{acknowledgements}}
\addcontentsline{toc}{section}{Acknowledgements}

Krivitsky was supported in part by US Army Research Office (Award
W911NF-21-1-0335 (79034-NS)).

\bibliographystyle{plainnat}
\bibliography{references.bib}

\end{document}